\documentclass[11pt,a4paper]{article}

\usepackage[T1]{fontenc}
\usepackage[utf8]{inputenc}
\usepackage[margin=1in]{geometry}
\usepackage{graphicx}
\usepackage{amsmath,amssymb}
\usepackage{amsthm}
\usepackage{booktabs}
\usepackage{multirow}
\usepackage{longtable}
\usepackage{tabularx}
\newcolumntype{L}[1]{>{\raggedright\arraybackslash}p{#1}}
\usepackage{url}
\usepackage{listings}
\usepackage{xcolor}
\usepackage{float}
\usepackage[hidelinks,bookmarksnumbered]{hyperref}
\makeatletter
\g@addto@macro\UrlBreaks{\do\_\do\/}
\makeatother
\usepackage{tikz}
\usetikzlibrary{arrows.meta,positioning}

\theoremstyle{plain}
\newtheorem{theorem}{Theorem}[section]
\newtheorem{lemma}[theorem]{Lemma}
\newtheorem{corollary}[theorem]{Corollary}
\theoremstyle{definition}
\newtheorem{definition}[theorem]{Definition}
\newcommand{\resHolds}{\textsc{holds}}
\newcommand{\resViolated}{\textsc{violated}}
\newcommand{\resUnknown}{\textsc{unknown}}
\newcommand{\solverSAT}{\textsc{sat}}
\newcommand{\solverUNSAT}{\textsc{unsat}}

\lstdefinelanguage{dsltrans}{
  keywords={metamodel, transformation, layer, rule, match, apply, backward,
            property, precondition, postcondition, any, direct, class, abstract,
            extends, enum, where, Bool, Int, String, true, false},
  sensitive=true,
  morecomment=[l]{//},
  morestring=[b]",
  basicstyle=\ttfamily\small,
  keywordstyle=\bfseries\color{blue!70!black},
  commentstyle=\itshape\color{gray},
  stringstyle=\color{red!70!black},
  showstringspaces=false,
  tabsize=2,
  breaklines=true,
  frame=single,
  numbers=left,
  numberstyle=\tiny\color{gray},
  xleftmargin=2em,
  framexleftmargin=1.5em
}

\title{Tractable Verification of Model Transformations:\\
A Cutoff-Theorem Approach for DSLTrans}

\author{%
  Levi L\'ucio\\
  \textit{Airbus Defense and Space}, Germany\\
  \texttt{levi.lucio@airbus.com}%
}

\date{}

\begin{document}
\maketitle

\begin{abstract}
Model transformations are central to Model-Driven Engineering (MDE), yet their formal verification has long been hindered by the undecidability of general-purpose transformation languages. DSLTrans was designed as a Turing-incomplete transformation language, intentionally trading expressiveness for verifiability. Despite this design philosophy, previous verification attempts based on path condition enumeration suffered from exponential blow-up and failed to scale to realistic transformations.

This paper presents a \emph{practically tractable verification workflow} for DSLTrans transformations and explains the formal conditions under which that workflow is complete. The workflow rests on three mutually reinforcing contributions: (i) a \emph{Cutoff Theorem} that makes bounded model checking complete for a precise fragment of DSLTrans and positive existence/traceability properties, reducing verification from an infinite search space to a finite, computable bound; (ii) a suite of composable, soundness-preserving optimizations---per-class bounds, fragment-based verification with CEGAR, and trace-aware dependency analysis---that substantially reduce the SMT encoding size; and (iii) an implementation backed by Z3 and validated on a concrete corpus of realistic model transformations drawn from the ATL Zoo and related community benchmarks.

We evaluate the approach on a corpus of 29 concrete transformations with 899 properties, spanning compiler lowering, schema translation, behavioral modeling, graph mapping, and synthetic stress tests. Of these, 552 properties are proved, 345 produce concrete counterexamples---a number that includes intentional negative tests and boundary properties that mark known non-guarantees of the current transformations, not only unintended bugs---and only 2 remain undecided within the timeout budget. For properties that exceed the tractability budget, we introduce a \emph{tractability-driven property refinement} technique---precondition specialization, postcondition decomposition, and transformation instrumentation---that achieves up to 112$\times$ speedup while eliminating spurious counterexamples. The approach is supported by a web-based IDE for authoring transformations and properties, and by a concrete execution engine for runtime validation.
\end{abstract}

\noindent\textbf{Keywords:}
Model transformation;
Formal verification;
DSLTrans;
SMT solving;
Bounded model checking;
Cutoff theorem.

\medskip

\section{Introduction}
\label{sec:introduction}

Model-Driven Engineering (MDE) relies on model transformations as first-class artifacts that translate models conforming to one metamodel into models conforming to another~\cite{sendall2003model,czarnecki2006feature}. As transformations become more complex and are deployed in safety-critical domains such as aerospace, automotive, and medical systems, the need for formal guarantees about their correctness grows correspondingly~\cite{cabot2010verification,lucio2010validation}.

Unfortunately, the formal verification of model transformations is inherently difficult. Most widely used transformation languages---such as ATL~\cite{jouault2008atl}, QVT~\cite{omg2016qvt}, and Epsilon~\cite{kolovos2008epsilon}---are Turing-complete, meaning that verification of arbitrary properties is undecidable in general. Even for restricted property classes, the combinatorial explosion of possible input models renders exhaustive verification intractable for practical transformations.

\textbf{DSLTrans} was designed from the outset to address this tension~\cite{barroca2011dsltrans}. It is a \emph{Turing-incomplete} out-place model transformation language that deliberately sacrifices expressiveness---forbidding unbounded recursion, in-place updates, and negative application conditions---in exchange for amenability to formal analysis. The language enforces a layered execution model with monotonic semantics: rules only add structure to the target model, never remove or modify existing elements.

\subsection{Previous Verification Attempts}

Early verification approaches for DSLTrans were based on \emph{symbolic path condition enumeration}~\cite{lucio2010validation,oakes2015fully,oakes2018thesis}. The idea was to enumerate all possible execution paths through the transformation---each path condition representing a distinct combination of rule firings---and then check each path against the property being verified. While theoretically sound, this approach suffered from a fundamental scalability problem: the number of path conditions grows exponentially with the number of rules and match patterns, reaching millions or billions for transformations with more than a handful of rules.

Several optimization techniques were proposed to mitigate this explosion, including path condition subsumption and merging~\cite{oakes2018full}. Despite these efforts, the approach could not be applied to transformations of realistic size. Benchmark transformations with 15--20 rules and 10+ properties routinely exceeded memory limits or required hours of computation, rendering the approach impractical for the iterative development workflows that engineers require.

\paragraph{What is new in this paper relative to that prior line of work.}
The first author of the present paper was also a co-author of the original DSLTrans verification approach~\cite{lucio2010validation} and of its symbolic-execution successor~\cite{oakes2015fully,oakes2018full}, and we want to be explicit about what is and is not inherited from that work. The earlier line of work fixed a Turing-incomplete language design and a sound path-condition enumeration semantics, and developed optimizations (subsumption, merging) that addressed scalability \emph{within} the path-condition paradigm. Crucially, no cutoff theorem was proposed, conjectured, or actively pursued in that prior work---the question of whether a small-model property could be derived from the structure of an F-LNR transformation was simply not on the agenda. In hindsight, the path-condition approach could only have scaled to realistic transformations through extreme compression of path conditions, an avenue that the literature did not find. The contributions in the present paper---the F-LNR/G-BPP fragment characterization, the cutoff theorem with its three concrete bound formulas, the per-class fixed-point bound, the trace-aware soundness lemma, the SMT-direct encoding, and the composability theorem---are therefore genuinely new with respect to the earlier DSLTrans line, and represent a different verification architecture rather than an incremental optimization of the previous one.

\subsection{Contributions of This Paper}

This paper presents a fundamentally different verification strategy that makes formal property checking of DSLTrans transformations practically tractable. Rather than enumerating paths, we encode the transformation semantics directly into an SMT formula and leverage the \emph{Z3} solver~\cite{demoura2008z3} for bounded model checking. The key insight is the \textbf{Cutoff Theorem}: for the fragment of DSLTrans we consider and for positive existence/traceability properties, every property violation has a bounded witness whose size depends only on the static characteristics of the transformation and property, not on the input model. This transforms bounded model checking from a heuristic into a \emph{complete verification method} for the stated fragment.

The primary contribution is a practically tractable verification workflow for DSLTrans; the cutoff theorem provides the formal basis for when bounded checking is complete, and the optimization suite makes completeness practically reachable.

Our contributions are:

\begin{enumerate}
  \item \textbf{A tractable verification workflow:} We implement a Python-based prover using Z3 that verifies properties of concrete DSLTrans transformations via automatic attribute abstraction and SMT-based bounded model checking. We evaluate it on a corpus of 29 concrete transformations with 899 properties and show that verification workloads remain tractable in practice.

  \item \textbf{The Cutoff Theorem for F-LNR $\times$ G-BPP:} We prove that for the Local, Non-Recursive fragment of DSLTrans (F-LNR) and the class of Bounded Positive Pattern properties (G-BPP), every violation has a small counterexample. We derive three alternative bounds ($K_{\text{coarse}}$, $K_{\text{sharp}}$, $K_{\text{tight}}$) and prove their soundness. This theorem is the formal justification for when bounded checking is complete.

  \item \textbf{Composable, soundness-preserving optimizations:} We develop per-class bounds via fixed-point analysis, minimal fragment selection with CEGAR refinement, trace-and-attribute-aware dependency pruning, and equisatisfiable SMT encodings. These compose safely and reduce verification time by factors of 10--1000$\times$ on our benchmarks.

  \item \textbf{A web-based IDE:} We provide a browser-based development environment for authoring transformations, defining properties, executing transformations concretely, and running formal verification---supporting an interactive, test-driven development workflow. The deployed instance is publicly available online~\cite{dsltrans2026webapp}.

  \item \textbf{Methodological insights:} We report on the iterative co-evolution of mathematical theory and implementation, and on the test-driven methodology for defining properties and constructing transformations.
\end{enumerate}

\subsection{Paper Organization}

The remainder of the paper is organized as follows:
\begin{itemize}
  \item Section~\ref{sec:worked-example} introduces the running UML-to-Java example.
  \item Section~\ref{sec:dsltrans} defines the DSLTrans fragment and the property fragment for which the results apply.
  \item Section~\ref{sec:cutoff} presents the cutoff theorem and its claim boundary.
  \item Section~\ref{sec:optimizations} develops the soundness-preserving optimization stack.
  \item Section~\ref{sec:prover} describes the prover architecture, implementation, and empirical evaluation.
  \item Section~\ref{sec:tractability} explains why the resulting verification problem is tractable in practice.
  \item Section~\ref{sec:methodology}, Section~\ref{sec:related-work}, and Section~\ref{sec:threats} discuss methodology, related work, and threats to validity.
  \item Section~\ref{sec:conclusion} concludes.
\end{itemize}

\section{Running Example: UML to Java}
\label{sec:worked-example}

To ground the discussion, we use as running example a transformation from UML class diagrams to Java Abstract Syntax Trees (ASTs). This transformation is a standard benchmark in the model transformation community, frequently used to evaluate transformation languages and verification approaches~\cite{atl-zoo}. Our version is substantive: it comprises 6 layers, 22 rules, and 29 properties, covering package mapping, classifier-to-compilation-unit creation, member generation (fields, methods, constructors), accessor synthesis (getters/setters), inter-type relationships (extends, implements, imports), and boilerplate generation (toString, equals, hashCode).

We chose this example because it exercises every aspect of the verification method: (i) \emph{layered structure}: six layers with backward links across layers test the dependency-depth analysis; (ii) \emph{trace-link richness}: most properties require trace links connecting source and target elements, exercising trace-aware relevance pruning; (iii) \emph{type polymorphism}: the metamodel contains abstract classes with multiple concrete subtypes, stressing per-class bound analysis and inheritance flattening; (iv) \emph{structural complexity}: accessor and boilerplate rules produce multiple target elements per match, leading to nontrivial cutoff bounds; and (v) \emph{community familiarity}: UML-to-Java variants appear across the ATL Zoo, TTC benchmarks, and related literature, providing external comparability.

\subsection{DSLTrans at a Glance}
\label{subsec:dsltrans-glance}

Before presenting the example, we briefly outline the key constructs of DSLTrans to help the reader parse the listings and diagrams that follow. A complete formalization appears in Section~\ref{sec:dsltrans}.

A DSLTrans specification is a single \texttt{.dslt} file containing \emph{metamodel}, \emph{transformation}, and \emph{property} declarations. A \textbf{metamodel} declares the typed classes, attributes, and associations of a modeling language. A \textbf{transformation} maps a source metamodel to a target metamodel via an ordered sequence of \emph{layers}, each containing one or more \emph{rules}. Each \textbf{rule} has three blocks: (i)~a \texttt{match} block, which specifies a pattern of source elements to find---elements prefixed \texttt{any} are universally quantified, and \texttt{direct} associations constrain structural adjacency; (ii)~an \texttt{apply} block, which specifies target elements and links to create when the match succeeds, optionally binding target attributes from matched source attributes; and (iii)~an optional \texttt{backward} block, whose concrete syntax uses \texttt{<{-}-trace-{-}>} to denote \emph{backward links} (i.e., dependencies on traces produced by earlier layers). A \textbf{property} is a \emph{precondition--postcondition} pair: for every match of the precondition pattern in the source model, a corresponding postcondition pattern must exist in the target model. Postconditions may include trace links to assert that specific target elements were produced from specific source elements.

DSLTrans also has a graphical syntax, rendered by the web-based IDE (Figure~\ref{fig:webapp-uml-java-transformation}). In the graphical view, each rule is shown as a node-and-link diagram with match elements above and apply elements below, while backward links are drawn as vertical connectors linking apply elements to previously created target elements. The textual listings in this paper are the authoritative specification; the graphical views provide a complementary visual overview of rule structure.

\subsection{Source and Target Metamodels}

The \textbf{source metamodel} captures a simplified but representative subset of UML class diagrams:

\begin{lstlisting}[caption={Source UML metamodel (excerpt)},label=lst:uml-mm]
metamodel UMLConcrete {
    enum ClassKind { Entity, Service, ValueObject }
    class Model { name: String }
    class Package { name: String }
    abstract class Classifier { name: String }
    class Class extends Classifier {
        isAbstract: Bool
        isFinal: Bool
        priority: Int
        layerTag: String
        kind: ClassKind
    }
    class Interface extends Classifier { }
    class Enumeration extends Classifier { }
    class DataType extends Classifier { }
    class PrimitiveType extends DataType { }
    abstract class Feature {
        name: String
        visibility: String
        isStatic: Bool
    }
    class Property extends Feature {
        isDerived: Bool
        isReadOnly: Bool
        lower: Int
        upper: Int
    }
    class Operation extends Feature {
        isAbstract: Bool
        isQuery: Bool
    }
    // ... associations: packagedElement, ownedAttribute,
    //     ownedOperation, type, generalization, etc.
}
\end{lstlisting}

The metamodel includes 22 classes with attributes ranging over strings, booleans, integers, and enumerations, connected by associations representing ownership (packages contain classifiers, classifiers own features), typing (properties have types), and inter-classifier relationships (generalization, interface realization, dependency).

The \textbf{target metamodel} represents a Java AST:

\begin{lstlisting}[caption={Target Java AST metamodel (excerpt)},label=lst:java-mm]
metamodel JavaASTConcrete {
    class CompilationUnit { fileName: String }
    class PackageDeclaration { name: String }
    class ClassDeclaration extends TypeDeclaration {
        isAbstract: Bool
        isFinal: Bool
        priority: Int
        layerTag: String
        kind: ClassKind
    }
    class InterfaceDeclaration extends TypeDeclaration { }
    class EnumDeclaration extends TypeDeclaration { }
    class FieldDeclaration extends BodyDeclaration { }
    class MethodDeclaration extends BodyDeclaration {
        isAbstract: Bool
    }
    class ConstructorDeclaration extends BodyDeclaration { }
    // ... TypeReference, Annotation, etc.
}
\end{lstlisting}

\subsection{Transformation Rules}

The transformation is organized into 6 layers, each handling a distinct phase of the translation:

\begin{enumerate}
  \item \textbf{Layer 1 (CreatePackages):} Maps each UML \texttt{Package} to a Java \texttt{PackageDeclaration}.

  \item \textbf{Layer 2 (CreateCompilationUnitsAndTopTypes):} For each packaged classifier (\texttt{Class}, \texttt{Interface}, \texttt{Enumeration}), creates a \texttt{CompilationUnit} containing the appropriate \texttt{TypeDeclaration}. Uses backward links to connect the compilation unit's package reference to the \texttt{PackageDeclaration} created in Layer 1.

  \item \textbf{Layer 3 (CreateOwnedMembers):} Generates \texttt{FieldDeclaration}s from \texttt{Property} elements, \texttt{MethodDeclaration}s from \texttt{Operation} elements, and \texttt{ConstructorDeclaration}s from \texttt{Class} elements.

  \item \textbf{Layer 4 (CompleteSignaturesAndAccessors):} Creates method parameters, return types, constructor parameters, and accessor methods (getters, setters, add/remove for collections).

  \item \textbf{Layer 5 (CreateInterTypeRelations):} Handles generalization (extends), interface realization (implements), dependencies (imports), and property type imports.

  \item \textbf{Layer 6 (AddCommentsAndUtilityMembers):} Generates Javadoc comments and utility methods (equals, hashCode, toString).
\end{enumerate}

A representative rule is shown in Listing~\ref{lst:rule-example}.

\begin{lstlisting}[caption={Example rule: mapping a packaged class to a compilation unit},label=lst:rule-example]
rule Class2CompilationUnit {
    match {
        any pkg : Package
        any cls : Class
        direct pe : packagedElement -- pkg.cls
    }
    apply {
        pkgDecl : PackageDeclaration
        cu : CompilationUnit { fileName = cls.name }
        classDecl : ClassDeclaration {
            name = cls.name,
            visibility = "public",
            isAbstract = cls.isAbstract,
            isFinal = cls.isFinal
        }
        outP : package -- cu.pkgDecl
        outT : types -- cu.classDecl
    }
    backward {
        pkgDecl <--trace-- pkg
    }
}
\end{lstlisting}

This rule demonstrates several key DSLTrans features: (i) \texttt{match} elements with type constraints and direct associations; (ii) \texttt{apply} elements with attribute bindings; (iii) \texttt{backward} links that connect a target element (\texttt{pkgDecl}) to a previously-created element via a trace link from a source element (\texttt{pkg}).

\subsection{Properties}

Properties express correctness requirements as \emph{precondition-postcondition} pairs. A typical property states:

\begin{lstlisting}[caption={Property: every UML package maps to a Java package declaration},label=lst:prop-example]
property PackageHasPackageDeclaration
  "Every UML package maps to a Java package declaration."
{
    precondition {
        any pkg : Package
    }
    postcondition {
        pkgDecl : PackageDeclaration
        pkgDecl <--trace-- pkg
    }
}
\end{lstlisting}

The semantics is: for \emph{every} match of the precondition pattern in the source model, there must \emph{exist} a match of the postcondition pattern in the target model. The trace link \texttt{<--trace--} asserts that the target element was produced from the source element by some rule. Properties can reference multiple source and target elements, include direct associations, and impose attribute constraints.

\begin{figure}[H]
\centering
\includegraphics[width=0.95\textwidth]{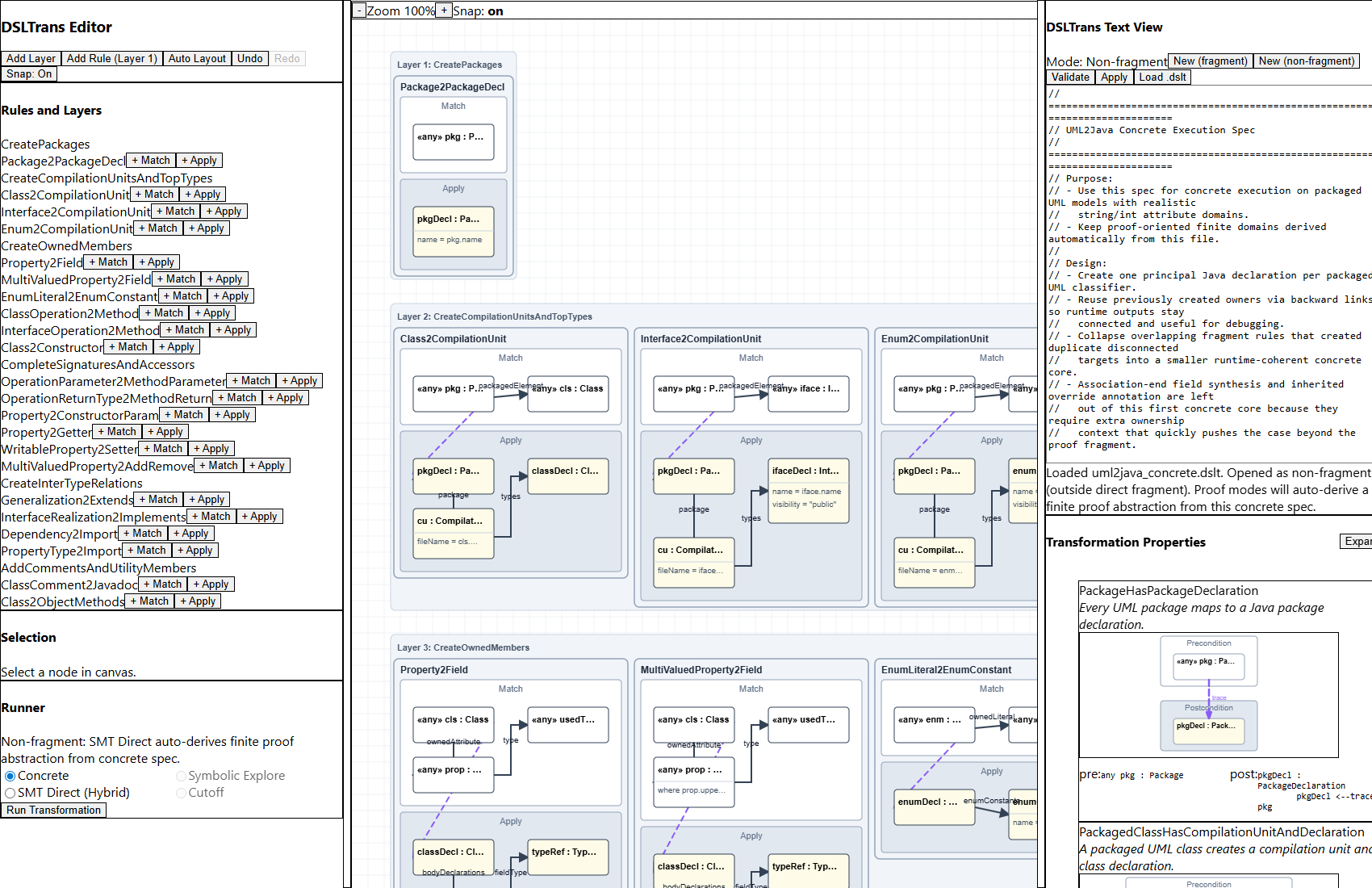}
\caption{UML-to-Java running example in the web IDE. The transformation view exposes the layered DSLTrans rules and the corresponding graphical rule structures used throughout the paper.}
\label{fig:webapp-uml-java-transformation}
\end{figure}

More complex properties involve multiple elements and associations. For example:

\begin{lstlisting}[caption={Property: class-owned property yields an owned field},label=lst:prop-field]
property OwnedPropertyHasOwnedField
  "Each class-owned UML property yields an owned Java field."
{
    precondition {
        any cls : Class
        any prop : Property
        direct owned : ownedAttribute -- cls.prop
    }
    postcondition {
        classDecl : ClassDeclaration
        field : FieldDeclaration
        classDecl <--trace-- cls
        field <--trace-- prop
        outB : bodyDeclarations -- classDecl.field
    }
}
\end{lstlisting}

This property involves two source elements connected by an association, and requires two target elements with a trace link to each source element and a target-side association between them. Note that the postcondition footprint is bounded and purely structural---a characteristic we will exploit in Section~\ref{sec:cutoff}.

\section{The DSLTrans Fragment for Verification}
\label{sec:dsltrans}

We now formalize the fragment of DSLTrans and the class of properties for which our verification approach is complete.

\subsection{Foundational Definitions}

\begin{definition}[Typed Graph / Model]
\label{def:typed-graph}
A \emph{typed graph} over a metamodel $MM = (C, A)$ (where $C$ is a set of classes and $A$ is a set of associations) is a tuple $M = (V, \tau, E, \alpha)$ where:
\begin{itemize}
  \item $V$ is a finite set of vertices (elements),
  \item $\tau: V \to C$ assigns a type (class) to each vertex,
  \item $E \subseteq V \times A \times V$ is a set of labeled edges (links); we write $(v, a, w) \in E$ for an edge from $v$ to $w$ with association label $a \in A$,
  \item $\alpha$ assigns attribute values to vertices.
\end{itemize}
The \emph{size} of the model is $|M| = |V|$.
\end{definition}

\begin{definition}[DSLTrans Rule]
\label{def:rule}
A \emph{rule} $R = (\mathit{Match}, \mathit{Apply}, \mathit{Backward})$ consists of:
\begin{itemize}
  \item A \emph{match pattern} $\mathit{Match}$: a typed graph fragment to be found in the source model. The \emph{match arity} $\mathit{ar}(R) = |V_{\mathit{Match}}|$ is the number of match elements.
  \item An \emph{apply pattern} $\mathit{Apply}$: a typed graph fragment to be created in the target model upon a successful firing. Apply elements may have attribute bindings derived from matched source elements.
  \item A set of \emph{backward links} $\mathit{Backward} \subseteq V_{\mathit{Apply}} \times V_{\mathit{Match}}$ (written with \texttt{<{-}-trace-{-}>} in concrete syntax): each pair $(v_a, v_m)$ declares that the apply element $v_a$ must be connected to an existing target element that was previously created from the source element bound to $v_m$, as recorded by trace links from earlier layers.
\end{itemize}
\end{definition}

\begin{definition}[DSLTrans Transformation]
\label{def:transformation}
A \emph{transformation} $T = (L_1, L_2, \ldots, L_n)$ is an ordered sequence of layers. Each layer $L_i$ contains a finite set of rules $\{R_{i,1}, \ldots, R_{i,k_i}\}$. Execution proceeds layer by layer. The output of layer $i$ is visible to backward links of rules in layers $j > i$, but not to rules in the same layer. Within a layer, rules execute independently and cannot interfere with each other.
\end{definition}

\begin{definition}[Rule Firing]
\label{def:firing}
A rule $R \in L_i$ \emph{fires} on a source model $M$ with accumulated target state $\hat{T}_{<i}$ (the target model produced by layers $1, \ldots, i{-}1$) when both of the following conditions hold:
\begin{enumerate}
  \item \textbf{Source match.} There exists an injective, type-compatible function $\mu: V_{\mathit{Match}} \to V_M$ that respects all structural (association) constraints in $\mathit{Match}$ and satisfies all attribute guards.
  \item \textbf{Backward resolution.} For each backward link $(v_a, v_m) \in \mathit{Backward}$, there exists a unique target element $t \in V_{\hat{T}_{<i}}$ such that $\tau(t)$ is compatible with $\tau(v_a)$ and the trace record of $\hat{T}_{<i}$ contains a link from the source element $\mu(v_m)$ to $t$.
\end{enumerate}
When $R$ fires, its apply pattern is instantiated: fresh target elements are created for each $v_a \notin \mathrm{dom}(\mathit{Backward})$, backward-linked apply elements are bound to their resolved targets, attribute bindings are evaluated, and trace links from source elements to newly created target elements are recorded.
\end{definition}

\paragraph{Notation.}
We write $\mathit{Fires}(R, M, \hat{T}_{<i})$ for the set of valid firing bindings of rule $R \in L_i$ on source model $M$ with target state $\hat{T}_{<i}$, i.e., the pairs $(\mu, \rho)$ satisfying conditions 1 and 2 above, where $\rho$ maps backward-linked apply elements to their resolved targets.
When backward links are absent ($\mathit{Backward} = \emptyset$), a firing reduces to a source-side match, and we write $\mathit{Matches}(R, M)$ for the set of injective functions $\mu$ satisfying condition~1.
We write $\mathrm{MtTypes}(R) := \{\tau(v) \mid v \in V_{\mathit{Match}}\} \subseteq C$ for the set of source \emph{class types} that appear on vertices of $R$'s match pattern.

\subsection{The F-LNR Fragment}
\label{subsec:flnr}

Our verification approach targets the \emph{Local, Non-Recursive} (F-LNR) fragment of DSLTrans.

\begin{definition}[F-LNR Fragment]
\label{def:flnr}
A transformation $T$ belongs to the F-LNR fragment if it satisfies:
\begin{description}
  \item[R1 (No indirect links):] Only direct associations are allowed in match patterns. Indirect links (transitive closure) are excluded.
  \item[R2 (Bounded match size):] There exists $m \in \mathbb{N}$ such that $\mathit{ar}(R) \le m$ for all rules $R \in T$.
  \item[R3 (Layered execution):] Rules are organized into ordered layers, and backward links point only to earlier layers. This is a structural property of DSLTrans and is guaranteed by construction.
  \item[R4 (Acyclic backward dependencies):] The dependency graph induced by backward links is acyclic. Since backward links always reference earlier layers, this follows from R3.
  \item[R5 (Finite attribute domains):] All attributes have finite domains: Bool, bounded-range Int, finite-vocabulary String, or Enum.
  \item[R6 (Monotonicity):] Rules only \emph{add} elements, links, and trace links to the target model; they never delete or modify existing elements. This is implicit in DSLTrans semantics.
\end{description}
\end{definition}

The rationale for each restriction is as follows:
\begin{itemize}
  \item \textbf{R1} ensures that rule matching is purely local, avoiding the need to compute transitive closures that could create unbounded dependencies.
  \item \textbf{R2} bounds the ``blast radius'' of any single rule match, ensuring that adding one rule to the witness model adds at most $m$ elements.
  \item \textbf{R5} ensures that the attribute space is finite, making SMT encoding decidable without resorting to undecidable string or arithmetic theories.
  \item \textbf{R6} (monotonicity) is the cornerstone of the cutoff argument: because rules only add structure, reducing the source model can only reduce the target model, preserving violations of positive properties.
\end{itemize}

\subsection{Properties and the G-BPP Fragment}
\label{subsec:gbpp}

\begin{definition}[DSLTrans Property]
\label{def:property}
A \emph{property} $P = (\mathit{Pre}, \mathit{Post}, \mathit{Trace})$ consists of:
\begin{itemize}
  \item A \emph{precondition pattern} $\mathit{Pre}$: a typed graph fragment over the source metamodel.
  \item A \emph{postcondition pattern} $\mathit{Post}$: a typed graph fragment over the target metamodel.
  \item A set of \emph{trace constraints} $\mathit{Trace} \subseteq V_{\mathit{Post}} \times V_{\mathit{Pre}}$: each pair $(v_t, v_s)$ asserts that the target element matching $v_t$ was produced from the source element matching $v_s$, as recorded by a trace link.
\end{itemize}
A transformation $T$ \emph{satisfies} $P$ on source model $M$, written $T(M) \models P$, if and only if: for every injective, type-compatible match $\mu_{\mathit{pre}}: V_{\mathit{Pre}} \to V_M$ that respects the structural constraints of $\mathit{Pre}$, there exists an injective, type-compatible match $\mu_{\mathit{post}}: V_{\mathit{Post}} \to V_{T(M)}$ that respects the structural constraints of $\mathit{Post}$ and satisfies every trace constraint---i.e., for each $(v_t, v_s) \in \mathit{Trace}$, the trace record of $T(M)$ contains a link from $\mu_{\mathit{pre}}(v_s)$ to $\mu_{\mathit{post}}(v_t)$.
We write $T(M) \not\models P$ when some precondition match has no corresponding postcondition witness.
\end{definition}

The cutoff theorem (Section~\ref{sec:cutoff}) applies to the following restricted fragment of properties.

\begin{definition}[G-BPP Property Fragment]
\label{def:gbpp}
A property $P = (\mathit{Pre}, \mathit{Post}, \mathit{Trace})$ belongs to the \emph{Bounded Positive Pattern} (G-BPP) fragment if:
\begin{description}
  \item[P1 (Bounded pattern size):] There exists $p \in \mathbb{N}$ such that $|V_{\mathit{Pre}}| \le p$ and $|V_{\mathit{Post}}| \le p$.
  \item[P2 (No indirect links):] No transitive-closure associations in $\mathit{Pre}$ or $\mathit{Post}$.
  \item[P3 (Local trace requirements):] Trace constraints relate only elements explicitly named in $\mathit{Pre}$ and $\mathit{Post}$. The property cannot assert ``some unknown source element produced this target.''
  \item[P4 (Positive existence/traceability semantics):] Satisfaction is monotone with respect to adding target structure. The property requires the \emph{existence} of a bounded postcondition witness for each precondition match; it does not express global absence, uniqueness, or exclusivity constraints.
\end{description}
\end{definition}

The key insight is that G-BPP properties have a bounded ``footprint'': the postcondition witness involves at most $p$ target elements, $p$ source elements, and finitely many local trace and structural links. This bounded footprint is what makes the cutoff argument possible.

\subsection{Concrete Execution versus Proof-Time Abstraction}
\label{subsec:abstraction}

In practice, transformations are written with unrestricted attribute domains (arbitrary strings, unbounded integers) for realistic execution, whereas Condition R5 requires finite domains for the proof. We therefore formalize an \emph{attribute abstraction} from concrete attribute valuations to finite proof-time valuations.

\begin{definition}[Attribute Abstraction]
\label{def:abstraction}
An \emph{attribute abstraction} is an attribute-only abstraction consisting of a family of total functions on source and target attribute valuations. For each source class $C$ and target class $D$, let
\[
\alpha_S^C : \mathrm{Val}^{\mathit{conc}}_C \to \mathrm{Val}^{\mathit{abs}}_C
\qquad\text{and}\qquad
\alpha_T^D : \mathrm{Val}^{\mathit{conc}}_D \to \mathrm{Val}^{\mathit{abs}}_D,
\]
where $\mathrm{Val}^{\mathit{conc}}_C$ and $\mathrm{Val}^{\mathit{abs}}_C$ (respectively $\mathrm{Val}^{\mathit{conc}}_D$ and $\mathrm{Val}^{\mathit{abs}}_D$) are the concrete and abstract attribute-valuation spaces for that class. These functions lift pointwise to models by leaving vertices, edges, types, rules, layers, and trace structure unchanged, and replacing only attribute valuations by their abstract images. The abstraction must satisfy:
\begin{enumerate}
  \item \textbf{Attribute-only scope:} the structural part of the specification is preserved (classes, associations, rules, layers, and trace structure unchanged).
  \item Abstract attribute domains are finite.
  \item Rule compatibility is preserved: for the abstracted attributes, rule applicability depends only on distinctions preserved by the abstraction.
  \item Property observables are preserved: attributes referenced by the property remain distinguishable.
\end{enumerate}
\end{definition}

\begin{lemma}[Claim Boundary]
\label{lem:claim-boundary}
A proof on the abstract specification is absolute for the abstract model family. It lifts to the concrete transformation only for correctness claims whose truth is preserved by $(\alpha_S, \alpha_T)$.
\end{lemma}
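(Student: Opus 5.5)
The plan is to make the statement precise enough to be provable and then split it into its two halves. I would first fix the semantics. Write $\mathcal{M}^{\mathit{abs}}$ for the abstract model family: all source models whose attribute valuations lie in the finite abstract domains. Write $T^{\mathit{abs}}$ for the transformation read over abstract valuations; by item~1 of Definition~\ref{def:abstraction} it has exactly the same rules, layers and backward links as $T$. A ``proof on the abstract specification'' means that $T^{\mathit{abs}}(N) \models P^{\mathit{abs}}$ holds for every $N \in \mathcal{M}^{\mathit{abs}}$. The first half of the lemma, absoluteness, I would treat as definitional. The claim quantifies over exactly $\mathcal{M}^{\mathit{abs}}$, so no further side condition is needed; the prover's bounded check plus the cutoff argument of Section~\ref{sec:cutoff} is assumed to discharge that universal statement. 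I will note this dependency explicitly rather than re-prove it.

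For the lifting half, I would take ``truth is preserved by $(\alpha_S,\alpha_T)$'' to mean the following: for every concrete source model $M$,
\[
T^{\mathit{abs}}(\alpha_S(M)) \models P^{\mathit{abs}} \;\Longrightarrow\; T(M) \models P .
\]
The argument then proceeds in three steps.

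\textbf{Step 1: lift concrete models.} $\alpha_S(M)$ lies in $\mathcal{M}^{\mathit{abs}}$, by totality and item~2 of Definition~\ref{def:abstraction}.

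\textbf{Step 2: commutation.} Show that $\alpha_T(T(M)) = T^{\mathit{abs}}(\alpha_S(M))$ up to identity of vertices, edges and traces. I would prove this by induction on the layer index $i$, comparing the accumulated states $\hat{T}_{<i}$. The structural parts of the source models coincide, so injective type-compatible matches $\mu$ are the same. Item~3 says each attribute guard holds concretely iff it holds on the abstract image. Backward resolution (Definition~\ref{def:firing}, condition~2) depends only on trace structure, which by the induction hypothesis coincides. Hence $\mathit{Fires}$ sets agree layer by layer. Created elements and traces are therefore identical, and the attribute bindings commute with $\alpha_T$, again by item~3.

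\textbf{Step 3: transfer satisfaction.} Every concrete precondition match is also an abstract precondition match. The abstract proof supplies a postcondition witness $\mu_{\mathit{post}}$ in $T^{\mathit{abs}}(\alpha_S(M))$. By Step~2 the same vertex map is a structural and trace witness in $T(M)$. Item~4 guarantees that any attribute constraints in $P$ that hold on abstract images also hold concretely.

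The main obstacle is Step~2's attribute-binding commutation together with Step~3's attribute transfer. Items~3 and~4 of Definition~\ref{def:abstraction} are stated informally, so I would first make them precise as ``guards and property predicates are $\alpha$-invariant'', i.e. they factor through $\alpha$. I would also require bindings to be $\alpha$-equivariant. If a claim falls outside this class, the lifting simply fails: for instance, a property distinguishing two strings collapsed by $\alpha_T$. This is exactly why the lemma restricts lifting to preserved claims, and I would exhibit such a counterexample to show the restriction is necessary.
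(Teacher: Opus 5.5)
Your proposal is correct and follows the same line as the paper's sketch. Conditions 1 and 3 of Definition~\ref{def:abstraction} make the abstract execution a faithful image of the concrete one: your layer-by-layer commutation $\alpha_T(T(M)) = T^{\mathit{abs}}(\alpha_S(M))$ is the precise form of the paper's ``preserves all structural and control-flow distinctions.'' Condition 4 then ensures the claim cannot distinguish valuations that $\alpha$ collapses, so truth transfers.

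One observation on the logic. Under your own definition of ``truth is preserved,'' lifting is already an immediate instantiation of the abstract proof at $\alpha_S(M)$. So Steps 2--3 really establish the paper's remaining point: the conditions of Definition~\ref{def:abstraction} suffice for preservation once you add binding equivariance. That assumption is not in the paper's list, and it is needed only when the claim observes target attributes.
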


\begin{proof}[Proof sketch]
The abstract specification preserves all structural and control-flow distinctions (by conditions 1, 3 of Definition~\ref{def:abstraction}), so any proof on it is exact for the abstract model family. Lifting to the concrete transformation requires only that the claim not distinguish concrete valuations that collapse to the same abstract valuation, which holds when condition 4 (observable preservation) covers the claim's referenced attributes.
\end{proof}

In practice, structural and traceability properties transfer directly, while exact attribute-value properties transfer only if the abstraction preserves the relevant distinctions.

\section{The Cutoff Theorem}
\label{sec:cutoff}

The cutoff theorem is the mathematical foundation of our verification approach. It establishes that for F-LNR transformations and G-BPP properties, if a property violation exists for \emph{any} input model (of any size), then a violation exists for a \emph{small} input model whose size is bounded by a computable constant $K$.

\subsection{Theorem Statement and Proof Sketch}

\begin{theorem}[Cutoff Theorem for F-LNR $\times$ G-BPP]
\label{thm:cutoff}
Let $T \in \text{F-LNR}$ be a transformation and let $P = (\mathit{Pre}, \mathit{Post}, \mathit{Trace}) \in \text{G-BPP}$ be a property---with precondition pattern $\mathit{Pre}$, postcondition pattern $\mathit{Post}$, and trace constraints $\mathit{Trace}$ as in Definition~\ref{def:gbpp}. Let:
\begin{itemize}
  \item $c$ = number of source classes relevant to $P$ (reachable in the dependency graph),
  \item $m$ = maximum match size over relevant rules,
  \item $p$ = maximum pattern size of $P$ (i.e., $\max(|V_{\mathit{Pre}}|, |V_{\mathit{Post}}|)$),
  \item $d$ = depth of the backward dependency graph (set $d' = \max(d, 1)$),
  \item $a$ = effective association arity for transitive well-formedness closure, i.e., the maximum number of additional elements forced by the least mandatory-association closure of any retained element,
  \item $r$ = total number of relevant rules for $P$.
\end{itemize}
Although Definition~\ref{def:gbpp}(P1) is stated with an existentially quantified bound, we instantiate $p$ as the \emph{exact} value $\max(|V_{\mathit{Pre}}|, |V_{\mathit{Post}}|)$ for the given $P$---the minimal natural number satisfying P1---and the cutoff formulas below use this $p$ (not a strictly larger surrogate).
Define the cutoff bound $K = \min(K_{\text{coarse}}, K_{\text{sharp}}, K_{\text{tight}})$ where:
\begin{align}
  K_{\text{coarse}} &= c \cdot (m + p) \cdot d' \cdot (a + 1) \label{eq:k-coarse} \\
  K_{\text{sharp}} &= p \cdot (1 + m \cdot r) \cdot d' \cdot (a + 1) \label{eq:k-sharp} \\
  K_{\text{tight}} &= p \cdot (1 + (m-1) \cdot r \cdot d) \cdot (a + 1) \label{eq:k-tight}
\end{align}
If there exists a model $M$ such that $T(M) \not\models P$, then there exists a model $M'$ with $|M'| \le K$ such that $T(M') \not\models P$.
\end{theorem}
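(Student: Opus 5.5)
The argument is a \emph{witness-extraction} (small-model) proof: from an arbitrary violating model $M$ we carve out a bounded submodel $M'$ that still violates $P$. Fix a violating model $M$ and a precondition match $\mu_{\mathit{pre}}: V_{\mathit{Pre}} \to V_M$ that has no postcondition witness in $T(M)$.

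\textbf{Step 1: build $M'$.} We start from the seed set $S_0 = \mu_{\mathit{pre}}(V_{\mathit{Pre}})$, with $|S_0| \le p$. We then close it backwards along the dependency structure. For every target element that a relevant rule could create with a trace to an element of the current set, we add the source elements of one firing binding that enables it. We also add the matched source elements that are needed to resolve its backward links, recursively through at most $d$ layers of backward dependencies. Finally, we close the resulting set under the mandatory associations required for well-formedness. The submodel $M'$ is the induced subgraph on the final set, with attributes inherited from $M$. Since $\mu_{\mathit{pre}}$ only uses elements and direct edges of $S_0$ (by P2 and R1), it remains a valid match in $M'$.

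\textbf{Step 2: $M'$ still violates $P$.} This is the key monotonicity step. We first prove, by induction on the layer index $i$ and using R3, R4 and R6, that the layer-wise target state of $M'$ embeds into that of $M$. The embedding respects types, edges and trace links.

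Here is why. Matching with direct links only (R1) is local, so every $\mu \in \mathit{Matches}(R, M')$ is also a match in $M$, since induced subgraphs preserve edges and attribute guards. For backward resolution, by induction every resolved target in $\hat{T}'_{<i}$ maps to one in $\hat{T}_{<i}$, so the firing $(\mu,\rho)$ has a counterpart in $M$.

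The delicate point is the \emph{uniqueness} requirement in Definition~\ref{def:firing}. A binding with a unique resolution in $M'$ might have several candidates in $M$, and then it would not fire in $M$. To handle this, we strengthen the induction hypothesis: trace records of $M'$ are exactly the restriction of those of $M$ to source elements of $M'$. That restriction preserves uniqueness in both directions.

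Given the embedding, any postcondition witness $\mu_{\mathit{post}}$ for $\mu_{\mathit{pre}}$ in $T(M')$ maps to a witness in $T(M)$, using P3 and P4: trace constraints only mention named elements, and satisfaction is positive existence. This contradicts the choice of $\mu_{\mathit{pre}}$, so $T(M') \not\models P$.

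We expect this \emph{uniqueness} subtlety to be the main obstacle. If the strengthened invariant fails, the fallback is to include in $M'$ the whole trace-relevant neighbourhood of each backward-resolved element, at the cost of enlarging the counting in Step 3.

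\textbf{Step 3: bound $|M'|$ three ways.} Each bound charges the closure differently, and we take $K$ to be their minimum.

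For $K_{\text{coarse}}$, we charge per relevant source class. Each of the $c$ classes contributes at most $m+p$ elements per dependency level, over $d'$ levels, and the mandatory closure multiplies the count by $(a+1)$.

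For $K_{\text{sharp}}$, we charge per seed. Each of the $p$ seeds keeps itself (the term $1$) plus at most $m$ elements for each of the $r$ rules that could supply a trace. This is iterated across $d'$ levels and multiplied by $(a+1)$.

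For $K_{\text{tight}}$, we observe that a firing binding shares at least one element with the already retained set. Each retained firing therefore adds at most $m-1$ new elements, and the $d'$ factor merges into the chain count $r\cdot d$.

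These counts are routine once the closure of Step 1 is organised as a tree of depth at most $d$ whose nodes are firing bindings of relevant rules.
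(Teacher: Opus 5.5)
Your overall architecture is the paper's: seed $\mu_{\mathit{pre}}(V_{\mathit{Pre}})$, bounded expansion along relevant firings, mandatory closure, induced submodel, a layer induction giving $T(M')\subseteq T(M)$, and three over-approximations of the expansion. The genuine gap sits exactly where you depart from it, at the uniqueness clause of Definition~\ref{def:firing}.

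Your strengthened invariant says the trace records of $M'$ are exactly those of $M$ restricted to source elements of $M'$. This cannot hold for an $M'$ whose size is bounded independently of $M$. Every firing of $M$ that binds some $s\in M'$ together with an element outside $M'$ records a trace from $s$ with no counterpart in $T(M')$. For instance, let one $A$-element be linked to $n$ $C$-elements, with a rule matching $A$--$C$ pairs. Then $T(M)$ has $n$ traces from that element, while $T(M')$ has only as many as the $C$-neighbours you kept. Enforcing the invariant means closing $M'$ under all firings touching it, which can be all of $M$. Your fallback openly gives up the stated $K$.

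What you actually need is only the one-way implication ``unique in $M'$ $\Rightarrow$ unique in $M$''. That means retaining a second, competing firing wherever $M$ is ambiguous. But each such blocker brings in new elements whose own backward resolutions may need blockers in turn, and nothing in $c,m,p,d,a,r$ bounds that cascade. In effect the uniqueness clause is a negative application condition (``no second traced target''). The paper itself lists NACs among the features that destroy $T(M')\subseteq T(M)$.

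The paper's Stage~4 does not engage with this. Its step~(ii) only uses that the trace link $\mu(v_m)\to t$ survives in $\hat T_{<i}$, so it reads backward resolution existentially. Under that reading your Step~2 is exactly the paper's induction, and the strengthened invariant should simply be dropped. Under the literal unique-resolution reading your worry is legitimate, and neither argument closes it.

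Moreover, once $T(M')\subseteq T(M)$ holds for every well-formed induced submodel containing $\mu_{\mathit{pre}}(V_{\mathit{Pre}})$, the dependency expansion of your Step~1 is superfluous. The seeds plus their mandatory closure already form a violating $M'$ with $|M'|\le p(a+1)$. This is at most each of the three formulas when $m,c\ge 1$; it is the content of Lemma~\ref{lem:witness-restriction}.

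That shortcut also spares you Step~3, whose counts are not routine as you organise them. Consider a depth-$d$ tree in which every retained element spawns up to $r$ firings, each adding up to $m-1$ new elements. It grows like $(1+r(m-1))^{d}$ per seed, not like the linear term $p\,r\,d\,(m-1)$. The paper gets that linear term only through its selective-chain argument in Appendix~\ref{app:cutoff-derivations}.
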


\begin{proof}[Proof sketch]
Because the property is local (bounded pattern of size $p$) and the rules are local (bounded match of size $m$, no transitive closure), any property violation is caused by a small cluster of elements. Starting from an arbitrary counterexample $M$, we construct a bounded witness submodel $M'$ in four stages.

\begin{enumerate}
  \item \textbf{Violation witness.} Since $T(M) \not\models P$, there exists an injective, type-compatible match $\mu_{\mathit{pre}}: V_{\mathit{Pre}} \to V_M$ of the precondition pattern in $M$ with no corresponding postcondition witness. Define $\mathit{Support}_0 = \mu_{\mathit{pre}}(V_{\mathit{Pre}})$, with $|\mathit{Support}_0| \le p$ (since $\mu_{\mathit{pre}}$ is injective).

  \item \textbf{Bounded rule participation.} Iteratively expand the support to include match elements of relevant rules that fire on the current support. Each rule match adds at most $m - 1$ new elements (since it must share at least one element with the existing support to be connected). After $d$ layers of backward dependency traversal, the support stabilizes at $\mathit{Support}_d$ with:
  \begin{equation}
    |\mathit{Support}_d| \le p \cdot (1 + (m-1) \cdot r \cdot d)
  \end{equation}
  Informally, this is a sound over-approximation: we start from at most $p$ seed elements, and across at most $d$ dependency layers we conservatively allow participation from any of the $r$ relevant rules at each layer, each contributing at most $m-1$ fresh connected elements. This yields the conservative multiplier $1 + (m-1)rd$ on the initial $p$ elements.

  \item \textbf{Association closure.} To maintain well-formedness, add the least mandatory-association closure of the support. By definition of $a$, each retained element forces at most $a$ additional elements under this transitive closure, so:
  \begin{equation}
    |\mathit{Closure}(\mathit{Support}_d)| \le |\mathit{Support}_d| \cdot (a + 1)
  \end{equation}

  \item \textbf{Witness preservation via monotonicity.} Let $M'$ be the submodel of $M$ induced by $\mathit{Closure}(\mathit{Support}_d)$. The logical shape of the argument is: (a) the property requires the \emph{existence} of a postcondition witness (positive existence, P4); (b) target growth is monotone in source size---adding source elements can only produce \emph{more} target structure, never less (R6); (c) therefore, removing irrelevant source structure from $M$ to obtain $M'$ can only shrink the target model, and cannot create the missing postcondition witness.

  Formally, we show $T(M') \subseteq T(M)$ as graphs by induction on layers. Consider layer $L_i$ and suppose the target states agree up to this point: $\hat{T}'_{<i} \subseteq \hat{T}_{<i}$. For any firing $(\mu, \rho) \in \mathit{Fires}(R, M', \hat{T}'_{<i})$: (i)~the source match $\mu$ maps into $M' \subseteq M$, so $\mu$ is also a valid source match in~$M$; (ii)~each backward resolution $\rho(v_a) = t$ requires a trace link from $\mu(v_m)$ to $t$ in $\hat{T}'_{<i}$, and since $\hat{T}'_{<i} \subseteq \hat{T}_{<i}$, the same trace link exists in $\hat{T}_{<i}$, so backward resolution succeeds in $M$ as well. Thus every firing in the smaller execution is also a firing in the larger one, target elements and trace links produced by $M'$ are a subset of those produced by $M$, and the inductive step holds. Since the postcondition witness was absent from the larger $T(M)$, it remains absent from the smaller $T(M')$, and the violation persists.
\end{enumerate}

Each of $K_{\text{coarse}}$, $K_{\text{sharp}}$, and $K_{\text{tight}}$ is obtained by a different sound over-approximation of stage 2. The bound $K_{\text{tight}}$ uses the connected-component argument ($m-1$ new elements per step), $K_{\text{sharp}}$ conservatively replaces $m-1$ by $m$, and $K_{\text{coarse}}$ further replaces $r$ with $c \cdot m / p$. After stage 3, each bound is multiplied by $(a+1)$, so all three soundly bound $|M'|$; therefore $|M'| \le K = \min(K_{\text{coarse}}, K_{\text{sharp}}, K_{\text{tight}})$. Appendix~\ref{app:cutoff-derivations} details each derivation.
\end{proof}

\paragraph{Where the argument breaks.}
The monotone witness-preservation argument above relies critically on three properties of the F-LNR fragment. The argument would fail if:
\begin{itemize}
  \item \textbf{Negative application conditions (NACs)} were allowed: a rule conditioned on the \emph{absence} of an element could fire in $M'$ but not in $M$, breaking the subset relationship $T(M') \subseteq T(M)$.
  \item \textbf{Deletion or update semantics} were present: in the larger model $M$, a rule might delete a target element; shrinking to $M'$ could cause that rule not to fire, allowing the element to survive in $T(M')$ and potentially completing a postcondition witness that did not exist in $T(M)$.
  \item \textbf{Transitive or nonlocal matching} were used: rule matches could then depend on arbitrarily long chains of elements, making the bounded support construction invalid.
  \item \textbf{Properties required global absence or uniqueness}: a property such as ``there is no element of type $T$'' is \emph{anti-monotone}---shrinking the source reduces what the transformation can build, so a violation present in $T(M)$ can \emph{disappear} in $T(M')$ when the smaller input simply produces fewer $T$-instances. The same holds for uniqueness or exclusivity constraints. In all these cases the small-model witness construction does not apply, because shrinking the model makes the property \emph{easier}, not harder, to satisfy.
\end{itemize}

\subsection{Significance}
\label{subsec:cutoff-significance}

The cutoff theorem transforms bounded model checking from a \emph{heuristic} (``we checked up to size $k$ and found nothing'') into a \emph{complete verification method} (``we checked up to the theorem-valid bound $K$ and can therefore guarantee the property holds for all models of any size''). The bound $K$ is a computable constant that depends only on the static characteristics of $(T, P)$---it has no dependence on the size of any particular input model.

\subsection{Claim Boundary}
\label{subsec:claim-boundary}

To help the reader calibrate the scope of the result, we state explicitly what is and is not established by the cutoff theorem.

\begin{description}
  \item[What is proved:] Completeness of bounded verification for the F-LNR fragment (Definition~\ref{def:flnr}) and G-BPP properties (Definition~\ref{def:gbpp}). If a property violation exists for any input model, it can be found within the computable bound $K$.

  \item[What is not claimed:] The theorem does not cover arbitrary DSLTrans transformations (e.g., those with indirect links or unbounded recursion), negative or global properties (absence, uniqueness, exclusivity), or unrestricted attribute domains. Properties outside G-BPP---such as those requiring global absence of a pattern or counting constraints---require separate justification.

  \item[What lifts to the concrete setting:] When verification uses an abstracted proof specification (Section~\ref{subsec:abstraction}), a proof result is absolute for the abstract model family. It transfers to the concrete transformation only for claims whose truth is preserved by the attribute abstraction map (Lemma~\ref{lem:claim-boundary}). Structural and traceability properties transfer directly; exact attribute-value claims transfer only if the abstraction preserves the relevant distinctions.
\end{description}

\section{Soundness-Preserving Optimizations}
\label{sec:optimizations}

While the cutoff theorem guarantees a finite search space, the \emph{raw} global bound $K$ computed from the theorem formulas alone can be very large---hundreds or even thousands for complex properties. However, raw $K$ is a poor predictor of actual verification difficulty: the optimizations below interact to reduce the \emph{effective} search space by orders of magnitude (e.g., from a raw $K = 735$ to an effective per-class bound of 6; see Table~\ref{tab:ablation}). What matters for tractability is the optimized bound vector, not the raw cutoff. We develop a suite of optimizations that reduce this effective search space while preserving the soundness and completeness of the verification. Crucially, these optimizations \emph{compose safely}: each independently preserves the verification guarantee, and their combination maintains it.

\paragraph{SMT encoding strategy.}
All optimizations in this section operate on top of a common encoding that discharges the bounded verification problem to an off-the-shelf SMT solver. Following the strategy outlined in Section~\ref{sec:introduction}, our prover targets the Z3 SMT solver~\cite{demoura2008z3} as its backend, and we use ``Z3'' and ``the SMT solver'' interchangeably in what follows. Given a transformation $T$, a bound $K$, and a property $P = (\mathit{Pre}, \mathit{Post})$, the prover asks the solver to decide
\begin{equation}
  \Phi \;=\; \Phi_{\mathit{world}} \;\wedge\; \Phi_{\mathit{rules}} \;\wedge\; \Phi_{\mathit{pre}} \;\wedge\; \neg\Phi_{\mathit{post}},
  \label{eq:smt-core}
\end{equation}
where $\Phi_{\mathit{world}}$ encodes a well-formed source model of size $\le K$, $\Phi_{\mathit{rules}}$ encodes the full firing semantics of every rule---source match satisfaction, backward link resolution against trace links from earlier layers, and target element creation (Definition~\ref{def:firing}), $\Phi_{\mathit{pre}}$ asserts that the precondition pattern is matched, and $\Phi_{\mathit{post}}$ encodes the existence of a postcondition witness including its target elements, target-side links, attribute constraints, and the required trace links back to the chosen precondition elements. Thus $\neg\Phi_{\mathit{post}}$ asserts that no such witness exists. If the solver returns $\solverUNSAT$, no violation exists within the bound---and by the cutoff theorem the property $\resHolds$ for \emph{all} input models. If it returns $\solverSAT$, the satisfying assignment encodes a concrete counterexample. The full construction of each sub-formula is given in Appendix~\ref{app:smt-encoding}; the optimizations below reduce the size of~$\Phi$ while preserving this equivalence.

\subsection{Fragment-Based Verification}
\label{subsec:fragments}

Rather than encoding the entire transformation, we verify each property against a \emph{fragment} $F \subseteq T$ containing only the rules relevant to that property. This reduces both the number of SMT variables and the cutoff bound $K$ (since $r$ and $d$ are computed on the fragment).

\begin{lemma}[Monotonic Fragment Soundness]
\label{lem:fragment-soundness}
Let $F \subseteq T$ be a subset of layers. For any source model $M$, if $F(M) \models P$, then $T(M) \models P$.
\end{lemma}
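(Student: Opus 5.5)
The plan is to reduce the statement to a monotonicity fact: running a sub-transformation yields a target model that embeds into the target produced by the full transformation. Positive existence of a postcondition witness (P4) then transfers from the smaller target to the larger one.

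Concretely, write $F = (L_{i_1}, \ldots, L_{i_k})$ with $i_1 < \cdots < i_k$, keeping the original layer order. I would show by induction on $j$ that the target state $\hat{F}_{\le j}$ produced after executing $L_{i_1},\ldots,L_{i_j}$ on $M$ is a subgraph of $\hat{T}_{\le i_j}$, the state of the full execution after layer $i_j$. The subgraph relation covers elements, links, and trace records, and is taken up to the canonical identification of elements created by the same firing $(R,\mu)$.

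\textbf{Inductive step.} Fix a rule $R \in L_{i_j}$ and a firing $(\mu,\rho) \in \mathit{Fires}(R, M, \hat{F}_{<j})$. Condition 1 of Definition~\ref{def:firing} depends only on $M$, so it holds verbatim in the full execution. For condition 2, each resolved target $\rho(v_a)=t$ is reached by a trace link from $\mu(v_m)$ in $\hat{F}_{<j}$. By the induction hypothesis and R6, that link is also present in $\hat{T}_{<i_j}$, since the omitted layers only add structure. Hence the same rule fires in $T$ on the same $\mu$, and it creates the corresponding elements, links and traces. Layers of $T$ not in $F$ only add further structure by R6, so the inclusion propagates up to the next layer of $F$. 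At the end we obtain $F(M) \subseteq T(M)$ together with its trace record.

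\textbf{Transfer step.} The precondition matches depend only on $M$, so they are identical in both settings. For each such match, $F(M)\models P$ supplies an injective $\mu_{\mathit{post}}$ into $F(M)$ satisfying the structural constraints of $\mathit{Post}$ and every trace constraint. Composing with the inclusion $F(M)\hookrightarrow T(M)$ gives a witness in $T(M)$. The composite is still injective, type-compatible and edge-preserving, and the required trace links are present. Therefore $T(M)\models P$. This step relies on $P$ being positive (P4): any constraint involving absence or uniqueness would fail to transfer.

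\textbf{Main obstacle.} The delicate point is backward resolution, whose \emph{uniqueness} clause is not monotone. In the larger state $\hat{T}_{<i_j}$, a source element may have acquired a second compatible traced target from a layer omitted in $F$. The resolution would then be non-unique, and $R$ would not fire in $T$. I would first try to dispose of this by interpreting uniqueness per backward link (traces record the producing rule, so resolution stays determined). If that fails, I would add the standard side condition that omitted layers create no traces of types compatible with backward links of retained rules. This condition is exactly what the fragment-selection dependency analysis guarantees by construction. I would state that assumption explicitly rather than hide it.
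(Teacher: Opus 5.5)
\textbf{Same route as the paper, but the lemma as stated needs a repair.} Your decomposition is the paper's. You show $F(M)\subseteq T(M)$ as graphs with trace records, then push each postcondition witness along the inclusion using P4. The paper compresses the first half into a one-line appeal to R6; your layer-by-layer induction is what that appeal has to unfold into, and your transfer step is the paper's second sentence. The obstacle you flag is real, and the paper's proof misses it. R6 constrains what firings produce, not when rules are enabled, and the uniqueness clause in Definition~\ref{def:firing} is anti-monotone in the target state.

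Read literally, the lemma is in fact false for arbitrary $F$. Take three single-rule layers, each matching one element $a : A$:
\begin{itemize}
  \item $L_1$ creates a fresh $x : X$;
  \item $L_2$ creates a fresh $x' : X$;
  \item $L_3$ has backward link $(x,a)$ with $x : X$ and creates a fresh $y : Y$.
\end{itemize}
Let $\mathit{Pre} = \{a : A\}$, $\mathit{Post} = \{y : Y\}$ and $\mathit{Trace} = \{(y,a)\}$. On a one-element source model, $F = (L_1, L_3)$ resolves $x$ uniquely and produces $y$, so $F(M)\models P$. In $T$ the element $a$ has two traced $X$-targets, so $L_3$ does not fire and $T(M)\not\models P$. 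All of R1--R6 and P1--P4 hold.

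\textbf{Your two remedies.} Remedy (a) is not available. In the paper a trace record is a bare source-to-target link and a backward link names no producing rule, so the example above is untouched. Remedy (b) works, but with two caveats:
\begin{itemize}
  \item It is an added hypothesis. The lemma quantifies over every subset of layers, and the paper nowhere shows that its minimal fragment is closed under backward-link dependencies.
  \item It does not go through with your invariant. Inclusion $\hat{F}_{<j}\subseteq\hat{T}_{<i_j}$ gives existence of the resolved target in $T$, but uniqueness needs an upper bound on the candidate set. You must carry the stronger invariant that retained layers fire on exactly the same bindings in both runs: omitted layers contribute no candidates by the side condition, and retained layers contribute the same ones by induction.
\end{itemize}

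\textbf{The cleaner fix.} Read backward resolution existentially: drop ``unique'' and let a firing be any pair $(\mu,\rho)$ with each $\rho(v_a)$ traced from $\mu(v_m)$. The pair-valued notation $\mathit{Fires}(R,M,\hat{T}_{<i})$ accommodates this, and stage~4 of the proof of Theorem~\ref{thm:cutoff} silently assumes it. Then every enabling condition is positive. Your inclusion induction goes through verbatim for every $F$, provided you identify created elements by $(R,\mu,\rho)$ instead of $(R,\mu)$, and your transfer step completes the proof.
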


\begin{proof}[Proof sketch]
By R6 (monotonicity), $F(M) \subseteq T(M)$ as graphs: every element, link, and trace created by $F$ is also present in $T(M)$. Since $P$ is a positive existence property (P4), if a postcondition witness exists in $F(M)$, the same witness exists in $T(M)$.
\end{proof}

This means that if the SMT solver proves $P$ holds ($\solverUNSAT$) on the fragment, the result is absolute for the full transformation. A $\resViolated$ result on the fragment, however, might be \emph{spurious}: rules in $T \setminus F$ might produce the missing postcondition structure.

\subsection{CEGAR Refinement}
\label{subsec:cegar}

We handle spurious counterexamples via Counterexample-Guided Abstraction Refinement (CEGAR)~\cite{clarke2003cegar}.

\begin{lemma}[Witness Completeness via CEGAR]
\label{lem:cegar}
Let $M$ be a counterexample such that $F(M) \not\models P$. If for all rules $R \in T \setminus F$, no firing is possible---i.e., $\mathit{Matches}(R, M) = \emptyset$ (no valid source match exists)---then $T(M) \not\models P$.
\end{lemma}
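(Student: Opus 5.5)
The plan is to show that, under the hypothesis of the lemma, the full execution of $T$ on $M$ produces exactly the same target model and trace record as the execution of the fragment $F$ on $M$, i.e.\ $T(M) = F(M)$ as graphs. Once this is done, the conclusion is immediate. The precondition match $\mu_{\mathit{pre}}$ witnessing $F(M) \not\models P$ is a match in the source model $M$, which is shared by both executions, and it has no postcondition witness in $F(M)$. Since $F(M) = T(M)$, it has none in $T(M)$ either, so $T(M) \not\models P$. The inclusion $F(M) \subseteq T(M)$ is already available from the argument behind Lemma~\ref{lem:fragment-soundness}. The real work is therefore the reverse inclusion $T(M) \subseteq F(M)$.

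I will prove that reverse inclusion by induction on layers, mirroring the layer induction in stage~4 of the proof of Theorem~\ref{thm:cutoff}. Write $\hat{T}_{<i}$ and $\hat{F}_{<i}$ for the accumulated target states after layers $1,\dots,i-1$ of $T$ and of $F$ respectively. The induction hypothesis is $\hat{T}_{<i} = \hat{F}_{<i}$. The base case is trivial, since both start from the empty target.

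For the step at layer $L_i$, partition the rules of $L_i$ into those in $F$ and those in $T \setminus F$, and treat each class separately.
\begin{itemize}
  \item A rule $R \in T \setminus F$ has $\mathit{Matches}(R,M) = \emptyset$ by hypothesis. Every firing $(\mu,\rho) \in \mathit{Fires}(R, M, \hat{T}_{<i})$ requires condition~1 of Definition~\ref{def:firing}, so its $\mu$ would lie in $\mathit{Matches}(R,M)$. Hence $\mathit{Fires}(R,M,\hat{T}_{<i}) = \emptyset$ regardless of backward links, and $R$ contributes no elements, links or traces.
  \item A rule $R \in F$ has firing set determined by $M$ together with the accumulated target state, which is identical in both runs by the induction hypothesis. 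So $\mathit{Fires}(R,M,\hat{T}_{<i}) = \mathit{Fires}(R,M,\hat{F}_{<i})$, and $R$ creates the same structure in both runs.
\end{itemize}
Definition~\ref{def:transformation} says rules within a layer are independent and only see earlier layers' output. Therefore the output of layer $i$ is the union of the per-rule contributions, and $\hat{T}_{<i+1} = \hat{F}_{<i+1}$ follows.

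The main obstacle is making precise what $F$ means when $F \subseteq T$ is a subset of layers or rules, and in particular whether identical firings yield \emph{identical} rather than merely isomorphic target elements. Fresh elements are created per firing, so I would identify each created target element by its provenance (rule, firing binding, apply vertex). With that convention, identical firing sets yield literally equal targets and traces. A second subtlety is the uniqueness requirement in backward resolution: a rule in $F$ might fail to resolve uniquely in $T$ because extra targets exist there. Equality of the target states, rather than mere inclusion, removes this issue. This is why I aim for equality, not just the inclusion used in Lemma~\ref{lem:fragment-soundness}.
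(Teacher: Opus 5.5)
Your proposal is correct and follows the same idea as the paper's proof: omitted rules have no source match, so they never fire, so $T(M)$ and $F(M)$ have identical target structure and the unwitnessed precondition match persists. Your layer induction on equality of the accumulated target states is a more careful version of the paper's one-line ``identical target structure'' claim; it also handles the uniqueness requirement in backward resolution, which the sketch glosses over, and it makes your preliminary appeal to Lemma~\ref{lem:fragment-soundness} unnecessary.
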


\begin{proof}[Proof sketch]
An omitted rule can contribute target structure only by firing on $M$, which requires at minimum a valid source match (Definition~\ref{def:firing}, condition~1). If no omitted rule has a source match, no omitted rule fires, so $T(M)$ and $F(M)$ produce identical target structure, and the violation on $F(M)$ persists in $T(M)$.
\end{proof}

If any omitted rule does match, its layer is added to the fragment and verification is retried. The implementation realises this idea as a \emph{progressive, counterexample-driven refinement loop} rather than an immediate jump to the full rule set: (i) try the \emph{minimal fragment} (fewest layers covering the property's postcondition types and trace requirements); (ii) on $\resViolated$, enter a CEGAR loop that, at each iteration, inspects the concrete counterexample~$M$ and adds back \emph{only} those omitted relevant rules whose match pattern is satisfiable on the source side of $M$ (Lemma~\ref{lem:cegar}'s match witness). The cutoff $K$ is recomputed for the enlarged fragment and, if it remains within the configured per-property budget, verification is retried; (iii) only if this loop reaches the full set of relevant rules (the \emph{baseline fragment}) without resolving the verdict, or if the refined cutoff exceeds the budget, the prover falls back to the baseline fragment as a final attempt. This preserves Lemma~\ref{lem:cegar}'s witness-completeness guarantee while keeping the SMT bound small for the typical case in which the violation is explained by a few additional rules rather than the entire transformation.

\subsection{Per-Class Bound Reduction}
\label{subsec:per-class}

The global bound $K$ allocates the same number of potential elements for every class, wasting capacity on classes barely touched by the property. We derive per-class bounds through a sequence of lemmas.

\begin{lemma}[Witness Restriction]
\label{lem:witness-restriction}
For F-LNR $\times$ G-BPP, the source side of any counterexample can be reduced to only the precondition witness elements plus the least mandatory-association closure required for well-formedness.
\end{lemma}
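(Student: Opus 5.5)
The plan is to reuse the monotonicity argument from stage~4 of the proof of Theorem~\ref{thm:cutoff}, this time applied to a much smaller support. Start from an arbitrary counterexample $M$ with $T(M) \not\models P$. Fix a precondition match $\mu_{\mathit{pre}}$ that has no postcondition witness in $T(M)$, and set $S_0 = \mu_{\mathit{pre}}(V_{\mathit{Pre}})$. Let $M'$ be the submodel of $M$ induced by $\mathit{Closure}(S_0)$, the least mandatory-association closure of $S_0$. The lemma then reduces to two checks. First, $M'$ is a well-formed source model that still contains $\mu_{\mathit{pre}}$ as a valid precondition match. Second, $T(M') \not\models P$ via that same match.

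The first check is routine. $\mathit{Closure}(S_0)$ is closed under mandatory associations by construction, so $M'$ is well-formed. Injectivity, typing, and every direct edge between elements of $S_0$ survive in the induced submodel. By P2 there are no indirect links in $\mathit{Pre}$, so no structural constraint of $\mathit{Pre}$ depends on elements outside $S_0$, and $\mu_{\mathit{pre}}$ remains a match in $M'$.

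For the second check, repeat the layer-wise induction from stage~4 of Theorem~\ref{thm:cutoff}. Assuming $\hat{T}'_{<i} \subseteq \hat{T}_{<i}$, every firing $(\mu,\rho) \in \mathit{Fires}(R, M', \hat{T}'_{<i})$ is also a firing in $M$. Its source match lies in $M' \subseteq M$, and by R1 it relies only on direct edges, which are preserved in the induced submodel. Each trace link it uses for backward resolution is already present in $\hat{T}_{<i}$. By R6, $T(M') \subseteq T(M)$, including the trace record. Now suppose a postcondition witness $\mu_{\mathit{post}}$ for $\mu_{\mathit{pre}}$ existed in $T(M')$. Its target elements, its target links, and the trace links it uses to satisfy the trace constraints (which by P3 are anchored at elements of $S_0$) would all lie in $T(M)$, and by P4 it would then be a witness in $T(M)$, a contradiction.

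The main obstacle is the \emph{uniqueness} clause in backward resolution (Definition~\ref{def:firing}, condition~2). Shrinking the target state can turn an ambiguous resolution in $M$ into a unique one in $M'$, so a rule could fire in $M'$ without firing in $M$. This would break the subset step, because inclusion of trace records does not preserve uniqueness. The first fix I would try is to strengthen the induction hypothesis so that trace links out of elements of $M'$ coincide exactly in $\hat{T}'_{<i}$ and $\hat{T}_{<i}$. That invariant would make the resolved target the same in both executions. If this invariant cannot be established, I would instead state the lemma under the standing assumption, already implicit in the proof of Theorem~\ref{thm:cutoff}, that backward resolution is monotone, and note this explicitly as a side condition. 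Once the witness restriction holds, the per-class bound follows by counting the elements of $\mathit{Closure}(S_0)$ of each class, which is at most the number of $\mathit{Pre}$ vertices of that class plus their closure contributions.
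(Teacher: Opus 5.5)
\paragraph{Verdict.} Your core argument is the paper's argument, written out more carefully. You restrict to the submodel induced by the mandatory closure of $\mu_{\mathit{pre}}(V_{\mathit{Pre}})$, prove $T(M')\subseteq T(M)$ by the layer-wise induction of stage~4 of Theorem~\ref{thm:cutoff}, and use P3/P4 to conclude that no postcondition witness can appear. The paper's sketch additionally appeals to ``F-LNR locality'' to claim that only witness and closure elements take part in relevant rule matches. Your inclusion argument makes that claim unnecessary; R1 is not needed either, since any match in an induced submodel is already a match in $M$.

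\paragraph{The gap.} It is the one you flag yourself, and the paper's sketch ignores it too: stage~4, step~(ii), only checks that the trace link survives. Neither of your two remedies closes it. Your first remedy requires trace links out of $M'$-elements to coincide in both executions. That invariant fails as soon as a firing in $M$ uses an element of $M'$ together with one outside $M'$.

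In fact, under the uniqueness clause of Definition~\ref{def:firing} the lemma is false, so no invariant can rescue it. Consider the following instance.
\begin{itemize}
  \item Source classes $A,B$ with a non-mandatory association $ab$; target classes $X,Y$.
  \item Layer~1 contains $R_1$ (match $a{:}A$, create $x{:}X$) and $R_2$ (match $a{:}A$, $b{:}B$ with $ab(a,b)$, create $x{:}X$).
  \item Layer~2 contains $R_3$ (match $a{:}A$, create $y{:}Y$, with an $X$-typed apply element $x$ and $(x,a)\in\mathit{Backward}$).
  \item The property $P$ has $\mathit{Pre}=\{a{:}A\}$, $\mathit{Post}=\{y{:}Y\}$ and $\mathit{Trace}=\{(y,a)\}$.
\end{itemize}
All of this lies in F-LNR $\times$ G-BPP. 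On $M=\{a,b\}$ with the $ab$-link, two $X$-elements are traced from $a$. Resolution is therefore not unique, $R_3$ does not fire, and $P$ is violated. On the only admissible reduct $M'=\{a\}$, exactly one $X$-element is traced from $a$, so $R_3$ fires and $P$ holds. Note that the trace $a\to x_2$ created by $R_2$ on $(a,b)$ is exactly what breaks your proposed invariant.

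\paragraph{What to do.} The lemma can only be obtained through your second option, and it must be stated as an explicit hypothesis rather than left as a fallback. Drop uniqueness, use existential resolution, and let each rule fire once for every pair $(\mu,\rho)$, as the notation $\mathit{Fires}(R,M,\hat T_{<i})$ suggests. This is what the paper tacitly assumes here and in Theorem~\ref{thm:cutoff}. Under that reading your proof is complete as written.
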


\begin{proof}[Proof sketch]
The violation is triggered by the bounded precondition witness (G-BPP). By F-LNR locality, only source elements in the witness or forced by mandatory associations participate in rule matches relevant to the violation. Removing all other source elements can only eliminate unrelated rule firings; by monotonicity (R6), the violation persists.
\end{proof}

\begin{corollary}[Source Seed Sufficiency]
\label{cor:source-seed}
The source-side per-class bounds satisfy: $K_S^{\text{base}} = p_S$ (the number of precondition elements of class $S$), extended by mandatory association closure.
\end{corollary}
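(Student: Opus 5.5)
The plan is to derive Corollary~\ref{cor:source-seed} directly from Lemma~\ref{lem:witness-restriction}, by reading off class counts from the reduced counterexample that the lemma provides. Start from an arbitrary counterexample $M$ with $T(M)\not\models P$, and fix a violating precondition match $\mu_{\mathit{pre}}$. Lemma~\ref{lem:witness-restriction} yields a submodel $M' \subseteq M$ whose vertex set is $W \cup \mathit{Closure}(W)$, where $W = \mu_{\mathit{pre}}(V_{\mathit{Pre}})$, and which still satisfies $T(M')\not\models P$. The corollary then reduces to counting, for each source class $S$, how many vertices of $M'$ can carry type $S$.

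For the seed part, injectivity of $\mu_{\mathit{pre}}$ gives a bijection between $V_{\mathit{Pre}}$ and $W$. Type-compatibility means a vertex of $W$ has type $S$ (or a subtype, under the paper's inheritance flattening) only if the corresponding precondition vertex is typed $S$ or a supertype of $S$. Hence $|\{w\in W \mid \tau(w)=S\}| \le p_S$, where $p_S$ counts the precondition elements whose declared type admits $S$. This justifies taking $K_S^{\text{base}} = p_S$.

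For the extension part, the remaining vertices of $M'$ come from the least mandatory-association closure of $W$. For each class $S$, let $\mathit{cl}_S(W)$ be the number of $S$-typed vertices that this closure can force. I would compute it as a least fixed point over the metamodel's mandatory associations: begin with the vector $(p_S)_S$, and repeatedly add, for each mandatory association from class $S'$ to class $S$ with lower bound $\ell$, the $\ell$ times the current $S'$ count to the count for $S$, stopping at the fixed point. The resulting per-class vector $K_S$ bounds $|\{v\in M' \mid \tau(v)=S\}|$. Its total is at most $p(a+1)$, which agrees with stage~3 of Theorem~\ref{thm:cutoff}. Consequently, any counterexample in which every class $S$ has at most $K_S$ instances lies within the encoded search space, and the per-class encoding is complete whenever the global one is.

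I expect the main obstacle to be ensuring that this fixed point terminates and that the closure is genuinely least. Cyclic mandatory associations could force unbounded growth, unless elements already present in $W$ can be reused to satisfy the obligations. I would handle this by relying on the definition of $a$ in Theorem~\ref{thm:cutoff}, which is assumed finite. A second, more delicate issue is that Lemma~\ref{lem:witness-restriction} omits the rule-participation support $\mathit{Support}_d$. I would argue that this omission is harmless: source elements needed only for other rule firings can be dropped by the monotonicity induction of Theorem~\ref{thm:cutoff}'s stage~4, because removing them only removes target structure, and P4 ensures that removing target structure cannot repair the violation.
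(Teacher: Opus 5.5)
Your route is the paper's. Its proof is the one line ``immediate from Lemma~\ref{lem:witness-restriction}'', and your injectivity and type-compatibility count of the seeds correctly unpacks it. Your subtype-inclusive $p_S$ is in fact more careful than the paper's, which gets exactness only through inheritance flattening.

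The step that fails as written is your explicit closure bound. ``Repeatedly add $\ell$ times the current $S'$ count'' is inflationary and has no fixed point. Even read as $x_S \leftarrow p_S + \sum_{S'} \ell_{S'S}\, x_{S'}$, it diverges whenever the reachable mandatory associations contain a cycle (e.g.\ each $S_1$ needs an owner of class $S_2$ and each $S_2$ needs at least one owned $S_1$). A class-level count cannot see that the least closure inside $M$ discharges the owner's obligation with the seed itself. Finiteness of $a$ is a statement about closures inside models. It gives $|M'| \le |W|(a+1) \le p(a+1)$ directly, but it neither makes your recurrence converge nor bounds your vector's total. With subtype-inclusive $p_S$, $\sum_S p_S$ can already exceed $p$, so ``total at most $p(a+1)$'' fails even with no mandatory associations. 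The repair is what Theorem~\ref{thm:per-class} does: cap every component at $K$ or at $p(a+1)$. Both caps are sound, because every class count of $M'$ is at most $|M'|$. The capped operator is monotone on a finite lattice, so iteration from $(p_S)_S$ terminates. An induction over the rounds in which the closure of $W$ is built inside $M$ then keeps the class counts of $M'$ below the iterates. Alternatively, do not quantify at all: the corollary only asserts that non-seed instances come from the closure.

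Separately, your justification for dropping $\mathit{Support}_d$ (``removing source elements only removes target structure'') matches the paper's proof of Lemma~\ref{lem:witness-restriction}. However, it is not valid under condition~2 of Definition~\ref{def:firing}, which requires the resolved target to be \emph{unique}. Deleting a source element can delete a competing traced target, so a backward link that was ambiguous in $M$ resolves in $M'$, and a rule fires only in the smaller model. Concretely, take three rules:
\begin{itemize}
  \item layer~1 maps each directly linked pair $(x{:}A, y{:}B)$ to a fresh $T$ traced from both;
  \item layer~2 matches $x{:}A$, resolves a backward link from a $T$ to $x$, and creates a $U$ traced from $x$;
  \item the association $A \to B$ is mandatory.
\end{itemize}
The model with one $A$ and two $B$s violates ``every $A$ has a $U$ traced from it''. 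Its seed-plus-closure restriction (one $A$, one $B$) satisfies the property, so $K_B = 1$ is not sufficient. Stage~4 of Theorem~\ref{thm:cutoff} has the same hole: it transfers the existence of the trace link from $\hat{T}'_{<i}$ to $\hat{T}_{<i}$, but not its uniqueness. Both your proof and the paper's therefore need an extra hypothesis before the corollary holds. One option is an existential reading of condition~2. The other is a guarantee of at most one compatible traced target per source element and target type.
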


\begin{proof}[Proof sketch]
Immediate from Lemma~\ref{lem:witness-restriction}: the base count for each source class $S$ is exactly $p_S$ (the precondition witness elements of that class), with additional instances arising only from mandatory-association closure.
\end{proof}

\begin{lemma}[Target Production Bound]
\label{lem:target-production}
Given source bounds $K_S$ for each source class $S$ (for example, the source-side bounds obtained from Corollary~\ref{cor:source-seed} after mandatory closure), the number of target elements of class $C$ is bounded by:
\begin{equation}
  K_C^{\text{base}} = p_C + \sum_{R \in \mathit{Relevant}} N_R \cdot m_C(R)
\end{equation}
Here $K_C^{\text{base}}$ means the \emph{base} bound for target class $C$: it counts the target instances of class $C$ that may be needed \emph{before} adding any further target elements forced only by target-side mandatory-association closure. The term $p_C$ is the number of postcondition elements of class $C$ that the property witness may require directly. The summation then adds the extra class-$C$ target instances that relevant transformation rules may produce:
\begin{itemize}
  \item for each relevant rule $R$, $N_R$ bounds how many times $R$ may fire on the bounded source witness;
  \item $m_C(R)$ is how many apply elements of class $C$ a single firing of $R$ can create;
  \item therefore $N_R \cdot m_C(R)$ bounds the total number of class-$C$ target elements contributed by rule $R$.
\end{itemize}
The firing bound $N_R$ is estimated by
\begin{equation}
  N_R \le \prod_{S \in \mathrm{MtTypes}(R)} K_S^{n_S(R)}
\end{equation}
which bounds the number of possible rule firings. Here $\mathrm{MtTypes}(R)$ is the set of source classes used in the match of rule $R$, and $n_S(R)$ is the number of match elements of class $S$ in $R$. Intuitively, for each source class used by $R$, we multiply by the number of available bounded choices for that class, once for each match position of that class. For example, if $R$ matches two elements of class $A$ and one of class $B$, then $N_R \le K_A^2 K_B$.
This explicitly covers properties whose postcondition witness requires several target elements produced by multiple firings of the \emph{same} rule on different bounded source bindings.
\end{lemma}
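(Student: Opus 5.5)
The plan is a counting argument on a fixed bounded counterexample. By Lemma~\ref{lem:witness-restriction} and Corollary~\ref{cor:source-seed}, we may assume the source side of the counterexample $M$ has at most $K_S$ elements of each source class $S$. We then partition the target elements of class $C$ that the verification problem needs to represent into two disjoint kinds. The first kind is the postcondition witness itself: at most $p_C$ elements that $\mathit{Post}$ requires directly. The second kind is the elements created by rule firings during the execution $T(M)$. Backward-linked apply elements are not fresh. Definition~\ref{def:firing} binds them to targets that were already created, so they are attributed to the earlier firing that produced them and must not be counted again. This gives the additive form $p_C + (\text{number of class-}C\text{ elements produced by relevant rules})$. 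Only relevant rules matter, because Lemma~\ref{lem:fragment-soundness} and the fragment construction let us discard rules outside $\mathit{Relevant}$ without affecting the verdict in the direction that is proved.

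The core step bounds the second term rule by rule. Each fresh target element carries a trace record back to a unique pair (rule $R$, firing binding $(\mu,\rho)$). The production count is therefore at most $\sum_{R} |\mathit{Fires}(R, M, \hat T_{<i})| \cdot m_C(R)$, where $m_C(R)$ is the number of non-backward apply vertices of type compatible with $C$. Two further facts are needed here.

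First, each distinct $\mu$ yields at most one firing. Backward resolution requires a \emph{unique} target $t$, so $\rho$ is determined by $\mu$ and $\hat T_{<i}$. Hence $|\mathit{Fires}| \le |\mathit{Matches}(R,M)|$.

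Second, $|\mathit{Matches}(R,M)| \le \prod_{S\in\mathrm{MtTypes}(R)} K_S^{n_S(R)}$. A match $\mu$ is a function from $V_{\mathit{Match}}$ to $V_M$ that respects types. Each of the $n_S(R)$ positions of class $S$ therefore has at most $K_S$ candidates, and the total is the product over positions. Injectivity and structural constraints only shrink this set. Inheritance needs care: a position typed by an abstract superclass may be filled by any subclass instance. We handle this either by reading $K_S$ as counting all instances compatible with $S$, consistent with the flattening used in the implementation, or by summing the subclass bounds.

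Substituting gives $N_R\cdot m_C(R)$ per rule, and summing over $\mathit{Relevant}$ yields $K_C^{\text{base}}$. The remark about several firings of the same rule needs no separate argument: distinct source bindings are distinct elements of $\mathit{Matches}(R,M)$, so they are already counted in $N_R$.

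The main obstacle is justifying that the witness elements and the produced elements can be counted separately and soundly. The required witness elements must in fact be produced by rules, since trace constraints under P3 demand trace links, which only firings create. So $p_C$ is a slack term. Its role is to guarantee that the SMT encoding has enough slots for the existential witness in $\neg\Phi_{\mathit{post}}$ even when no firings occur, so that failure to find a witness is not an artifact of missing capacity. I would state this explicitly and argue that over-counting is harmless, since the result is only an upper bound on the slots required.
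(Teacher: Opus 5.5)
Your proposal is correct and follows the same argument as the paper's proof sketch. Class-$C$ demand is split into the $p_C$ postcondition term plus rule-produced elements. Each relevant rule fires at most once per injective, type-compatible binding over the bounded source counts, hence at most $\prod_{S \in \mathrm{MtTypes}(R)} K_S^{n_S(R)}$ times. Each firing creates at most $m_C(R)$ fresh class-$C$ elements, and summing over relevant rules gives $K_C^{\text{base}}$.

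Your additional remarks spell out what the paper leaves implicit: $\rho$ is determined by $\mu$ because backward resolution is unique, backward-bound apply elements are not recounted, and abstract-typed match positions need $K_S$ to count all compatible instances. One small slip: you attribute the slackness of $p_C$ to P3. The real reason is that every target element of an out-place execution is created by some firing, whether or not it carries a trace constraint. The slip is harmless, since $p_C$ only enlarges an upper bound.
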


\begin{proof}[Proof sketch]
Target elements of class $C$ arise from postcondition requirements ($p_C$) or from rule firings. Rule $R$ can fire at most $N_R \le \prod_{S \in \mathrm{MtTypes}(R)} K_S^{n_S(R)}$ times (one firing per injective binding over bounded source counts), each creating at most $m_C(R)$ elements of class $C$. Summing over all relevant rules gives the stated bound.
\end{proof}

\begin{theorem}[Per-Class Bound Theorem]
\label{thm:per-class}
A counterexample exists with each class count $\le K_C$, where the per-class bounds $K_C$ are given by the least fixed point of source seeds, source mandatory closure, target production bounds, target mandatory closure, and a global cap at $K$.
\end{theorem}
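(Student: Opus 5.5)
The plan is to start from an arbitrary counterexample and shrink it in stages, checking at each stage that the class counts stay within the current approximation of the fixed point. Let $M$ be any model with $T(M)\not\models P$. By Theorem~\ref{thm:cutoff} we may assume $|M|\le K$. Lemma~\ref{lem:witness-restriction} then lets us replace $M$ by the submodel $M_S$ induced by the precondition witness $\mu_{\mathit{pre}}(V_{\mathit{Pre}})$ together with its least mandatory-association closure, and the violation persists. First we define the monotone operator $\Gamma$ on vectors $(K_C)_{C}$ over $\mathbb{N}\cup\{\infty\}$, ordered pointwise. One application of $\Gamma$ does the following, in order:
\begin{enumerate}
\item it sets the source seeds to at least $p_S$ (Corollary~\ref{cor:source-seed});
\item it adds source mandatory-closure requirements;
\item it sets each target class to at least $p_C+\sum_R N_R\, m_C(R)$, with $N_R$ computed from the current source entries (Lemma~\ref{lem:target-production});
\item it adds target mandatory-closure requirements;
\item it caps every entry at $K$.
\end{enumerate}
Each component is built from sums, products and $\min$ with a constant, so it is monotone. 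All values lie in the finite lattice $\{0,\dots,K\}^{|C|}$, so Kleene iteration from the zero vector reaches the least fixed point $K^\ast$ in finitely many steps.

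Next we show that $M_S$ and its execution $T(M_S)$ respect $K^\ast$ class by class. On the source side, $M_S$ has exactly $p_S$ witness elements of each class $S$ plus closure elements. The closure elements are counted by the source-closure component, and since $K^\ast$ is a fixed point it already contains these counts. On the target side, every element of $T(M_S)$ is created by some firing of a relevant rule. Here we argue that irrelevant rules can be dropped by Lemma~\ref{lem:fragment-soundness} in reverse: since their output is not observed by $P$ or by any relevant rule, dropping them keeps the violation, and the monotonicity induction from the proof of Theorem~\ref{thm:cutoff} carries this over. The number of firings of $R$ is at most the number of injective bindings, which is at most $\prod_S (K^\ast_S)^{n_S(R)}$, so the count for class $C$ is within the production term of $K^\ast_C$. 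Elements forced by target well-formedness fall under the target-closure component. The $p_C$ slack accounts for the encoding's witness variables: it is the postcondition elements that the solver must be able to allocate when refuting a witness. The cap at $K$ is harmless because $|M_S|\le|M|\le K$ and, by the same argument, target counts are at most those of a $K$-bounded run.

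The main obstacle is the interaction between the cap and the closures. Capping at $K$ could in principle cut below what a closure really requires, and a target closure could demand elements that the transformation does not actually create. For the first issue, I would argue inductively along the Kleene chain that the concrete counts of $M_S$ and $T(M_S)$ are below every iterate once that iterate dominates them. This holds because each component of $\Gamma$ upper-bounds the corresponding concrete quantity given upper bounds on its inputs, and the concrete counts are already at most $K$, so the cap never bites below them. For target closure, I would interpret target mandatory closure as a bound on the extra target vertices that the encoding's well-formedness constraints $\Phi_{\mathit{world}}$ may introduce. Adding such vertices never creates a postcondition witness with the required trace links, because P3 and P4 demand traces and freshly added closure vertices carry none. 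The counterexample therefore survives within the bound vector $K^\ast$, which proves the theorem.
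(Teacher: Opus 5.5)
Your skeleton matches the paper's fixed-point argument: seeds, source closure, production via $N_R\le\prod_S K_S^{n_S(R)}$, target closure, cap, and Kleene iteration on $\{0,\dots,K\}^{|C|}$. Your plan to show the reduced witness is dominated stage by stage is the right way to finish. Precisely, the counts reached after $j$ closure/production stages are bounded by $\Gamma^{j}(0)$, which is what ``below every iterate once that iterate dominates them'' should say. That is better than the appendix's Step~6, which appeals to the fixed point being the \emph{least} vector satisfying the obligations; leastness only places it below such vectors, not above the witness.

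The genuine gap is how you dispose of the global cap. For source classes your argument is fine, since $M_S\subseteq M$ and $|M|\le K$. For target classes it fails. The Cutoff Theorem bounds only the number of \emph{source} vertices, and ``target counts are at most those of a $K$-bounded run'' does not give $\le K$. Production is additive over relevant rules and multiplicative over match positions, which is exactly why the product term appears. So your claim that the cap never bites below the concrete counts is false for target classes.

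Here is a minimal instance:
\begin{itemize}
\item one source class $A$;
\item five layer-1 rules, each matching \texttt{any x : A} and creating a fresh \texttt{Pair};
\item precondition \texttt{any x : A};
\item a postcondition asking for two \texttt{Pair}s joined by a target association that no rule creates, one of them traced from \texttt{x}.
\end{itemize}
Then $p=2$, $m=1$, $r=5$, $d=0$, $a=0$, so $K=K_{\text{tight}}=2$ and the capped fixed point gives $K_{\texttt{Pair}}=2$. Yet every counterexample contains an $A$ and therefore at least five \texttt{Pair}s. Under the encoding's forced firings with at-most-one-creator, the bounded query would be \solverUNSAT{}.

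The paper's sketch also just asserts the cap, so there is no missing ingredient to borrow from it. What your argument actually proves is the variant in which only source entries are capped. Each target entry is then $p_C+\sum_R N_R\,m_C(R)$, computed from the source fixed point, plus closure terms, which only enlarge entries.

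Two smaller points. First, your target-closure argument assumes every postcondition vertex carries a trace constraint, which P3 does not require. It is also unnecessary, since the semantic target model contains no closure-added vertices. Second, dropping irrelevant rules is the contrapositive of Lemma~\ref{lem:fragment-soundness}, and after that step you are counting $F(M_S)$, not $T(M_S)$.
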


\begin{proof}[Proof sketch]
The preceding lemmas enumerate every source of per-class demand: source seeds, mandatory closure, and target production. Starting from seed counts and iterating these obligations produces a monotone sequence of per-class bounds, capped at $K$, that stabilizes at a least fixed point. Any minimal witness respects each obligation, so its per-class counts are bounded componentwise by this fixed point. See Appendix~\ref{app:per-class-proof} for the full argument.
\end{proof}

The practical impact is substantial. As a concrete instance, consider the UML-to-Java property \texttt{PropertyHasField}---``every UML \texttt{Property} traces to a Java \texttt{FieldDeclaration}''---whose full derivation is carried out in Appendix~\ref{app:worked-example}. The theorem's global bound is $K = 78$, but per-class analysis determines that only 1 \texttt{Property} slot and 1 \texttt{Class} slot are needed on the source side, reducing the effective search space from 78 uniform slots to a handful of per-class slots.

\subsection{Trace-Aware Dependency Pruning}
\label{subsec:trace-aware}

The cutoff parameters $r$ (relevant rules) and $d$ (dependency depth) critically affect the bound $K$. A naive computation includes every rule that can produce any target element of the required type. We refine this using trace and attribute awareness.

In what follows it is convenient to think of a \emph{relevance mode} as any concrete procedure that, given a property and a rule set, returns values for $r$ and $d$. The prover ships with three such modes---\emph{legacy}, \emph{trace-aware}, and \emph{trace-and-attribute-aware}---introduced in the paragraphs below. We write $r_{\text{mode}}$ and $d_{\text{mode}}$ for the values returned by an arbitrary such mode and $K_{\text{mode}}$ for the bound the Cutoff Theorem then yields from them; stronger modes prune more aggressively and so produce smaller (tighter) values. The next lemma states the soundness condition that every such mode must satisfy.

\begin{lemma}[Soundness of a Relevance Mode]
\label{lem:dependency-pruning}
Fix a property $\varphi$, and let $r_{\text{true}}$ and $d_{\text{true}}$ denote, respectively, the minimal relevant-rule count and minimal dependency depth that suffice for the Cutoff Theorem's witness construction. If a relevance mode returns values $r_{\text{mode}}$ and $d_{\text{mode}}$ with
\[
  r_{\text{true}} \le r_{\text{mode}} \quad \text{and} \quad d_{\text{true}} \le d_{\text{mode}},
\]
then the resulting bound $K_{\text{mode}}$ also satisfies the conditions of the Cutoff Theorem; that is, $K_{\text{mode}}$ is a sound (possibly looser) witness bound.
\end{lemma}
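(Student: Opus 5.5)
The plan is to show that the witness-size bound produced in the proof of Theorem~\ref{thm:cutoff} is monotone in the parameters $r$ and $d$. The construction there needs only \emph{some} upper bound on the number of participating rules and on the depth of backward-dependency traversal. Concretely, I would argue in three steps.

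\textbf{Step 1: the construction itself does not depend on the parameters.} Revisit stages~1--4 of the proof of Theorem~\ref{thm:cutoff}. The violation witness $\mathit{Support}_0$, the expansion to $\mathit{Support}_d$, the mandatory closure, and the monotonicity argument $T(M')\subseteq T(M)$ are all defined from the counterexample $M$ and the semantics of $T$. The numbers $r$ and $d$ enter only when stage~2 \emph{counts} elements. By definition of $r_{\text{true}}$ and $d_{\text{true}}$, the expansion stabilizes after at most $d_{\text{true}}$ rounds and involves firings of at most $r_{\text{true}}$ distinct rules. Hence
\[
|\mathit{Support}|\le p\,(1+(m-1)\,r_{\text{true}}\,d_{\text{true}}).
\]
The same counting, with $m$ in place of $m-1$ and $d'$ in place of $d$, gives the sharp variant.

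\textbf{Step 2: monotonicity of the bound formulas.} Each of $K_{\text{coarse}}$, $K_{\text{sharp}}$ and $K_{\text{tight}}$ in~\eqref{eq:k-coarse}--\eqref{eq:k-tight} is a product and sum of nonnegative integer quantities, and is nondecreasing in $r$ and in $d$ (and in $d'=\max(d,1)$). $K_{\text{coarse}}$ does not involve $r$ at all, and it involves $d$ only through $d'$. Therefore $r_{\text{true}}\le r_{\text{mode}}$ and $d_{\text{true}}\le d_{\text{mode}}$ give $K_\bullet(r_{\text{true}},d_{\text{true}})\le K_\bullet(r_{\text{mode}},d_{\text{mode}})$ for each variant. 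Taking the minimum preserves this, since $\min$ is monotone componentwise. So $K_{\text{true}}\le K_{\text{mode}}$.

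\textbf{Step 3: conclusion.} The model $M'$ from Step~1 satisfies $T(M')\not\models P$ and $|M'|\le K_{\text{true}}\le K_{\text{mode}}$. This is exactly the conclusion of Theorem~\ref{thm:cutoff} with $K_{\text{mode}}$ as the bound, so $K_{\text{mode}}$ is a sound witness bound. It may be looser than necessary, which is harmless: a bounded search up to $K_{\text{mode}}$ still explores every model of size at most $K_{\text{true}}$.

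The main obstacle is Step~1: making precise that ``minimal relevant-rule count and depth that suffice'' really means that the stage-2 expansion can be restricted to those rules and that depth without losing the violation. My first attempt would be to \emph{define} $r_{\text{true}},d_{\text{true}}$ as the number of rules and the backward-dependency depth actually exercised by the closure in stage~2. I would then note that the counting argument of Appendix~\ref{app:cutoff-derivations} uses them only as upper bounds. A further subtlety is the coarse bound, whose derivation replaces $r$ by $c\cdot m/p$. Here I would check that $c$ is not affected by the mode, or, if $c$ is computed on the pruned dependency graph, impose the analogous condition $c_{\text{true}}\le c_{\text{mode}}$ and argue monotonicity in $c$ in the same way.
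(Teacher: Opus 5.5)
Your proposal is correct and takes essentially the same route as the paper. The paper's proof is just your Steps~2--3: each cutoff formula is monotone in $r$ and $d$, so $K_{\text{true}} \le K_{\text{mode}}$, and any witness fitting under $K_{\text{true}}$ also fits under $K_{\text{mode}}$. Your Step~1 and your remarks on the monotonicity of $\min$ and on the mode-dependence of $c$ add care that the paper's sketch leaves implicit; the same mode-dependence applies to $m$, which is also computed over the relevant rules.
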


\begin{proof}[Proof sketch]
Each cutoff formula is monotone in $r$ and $d$: replacing the minimal values $r_{\text{true}}, d_{\text{true}}$ by any over-approximations $r_{\text{mode}}, d_{\text{mode}}$ can only enlarge the resulting bound, so $K_{\text{mode}} \ge K_{\text{true}}$. Any witness fitting under $K_{\text{true}}$ therefore also fits under $K_{\text{mode}}$; soundness is preserved, though tightness may be lost.
\end{proof}

The three modes mentioned above instantiate this scheme, ranging from a coarse baseline to the refinement actually used in the theory.
\paragraph{Legacy baseline.}
The coarsest safe approximation is \emph{legacy} relevance: if the property requires a target element of type $T$, then every rule that can produce an element of type $T$ is considered relevant. This is sound because it certainly includes every rule that might participate in a witness, but it is often very imprecise: many such rules produce the right target type for the \emph{wrong} source provenance.

\paragraph{Trace-aware refinement.}
The key observation is that a postcondition trace requirement does not ask merely for ``some target element of type $T$''; it asks for a target element of type $T$ that was produced \emph{from the appropriate source element}. Thus, if the property contains a trace requirement from a source type $S$ to a target type $T$, a witness can only use rules that match an $S$-typed source element and create a fresh $T$-typed target element. This replaces the coarse question
\[
\text{``which rules can create } T\text{?''}
\]
by the more informative question
\[
\text{``which rules can create } T \text{ traced from } S\text{?''}
\]
and therefore prunes away rules that have the right target type but the wrong provenance.

The same idea applies recursively to dependencies created by backward links. If a later rule needs an existing target element of type $T$ traced from a match element of type $S$, then only earlier rules that can produce that same $(S,T)$ trace pair need to appear in the dependency closure. Consequently, trace-aware relevance and dependency depth remain sound over-approximations, but are typically much smaller than the legacy ones.

\paragraph{Implementation refinement.}
The prover also implements a further \emph{trace-and-attribute-aware} pruning mode, which discards trace-compatible rules whose attribute bindings are incompatible with the property's attribute requirements. This is useful in practice, but the theoretical development in this section only requires the trace-aware refinement above.

For UML-to-Java, trace-aware mode reduces $r$ from 10+ to 2--3 for many properties, shrinking $K$ by factors of 3--5$\times$.

\subsection{Equisatisfiable SMT Encoding}
\label{subsec:equisatisfiable}

\begin{lemma}[Factored Postcondition Equivalence]
\label{lem:factored-post}
If the postcondition graph has independent connected components $C_1, \ldots, C_k$, the monolithic existential quantification is equivalent to the factored formula:
\begin{equation}
  \Phi_{\text{factored}} = \bigwedge_{i=1}^k \left(\exists \vec{e} \in C_i : \mathit{links}(\vec{e})\right)
\end{equation}
reducing the SMT search space from $O(N^{|P|})$ to $O(|P| \cdot N^{\max|C_i|})$, where $N$ is the total number of target slots and $|P|$ abbreviates the number of postcondition elements $|V_{\mathit{Post}}|$.
\end{lemma}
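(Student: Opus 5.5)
The plan is to read the postcondition as a conjunctive existential formula over its element variables, and to use the fact that independence of components makes this formula split into a prenex conjunction. Concretely, I will write the monolithic formula as $\Phi_{\mathit{post}} = \exists \vec{e}\,.\,\bigl(\mathit{Inj}(\vec{e}) \wedge \bigwedge_{\ell} \lambda_\ell(\vec{e})\bigr)$. Here $\vec{e}$ ranges over assignments of postcondition vertices to target slots, and the $\lambda_\ell$ are the atomic constraints: typing, target-side links, attribute guards, and trace links back to the fixed precondition elements. I will fix the precondition match $\mu_{\mathit{pre}}$ and the target state, so that the trace atoms are unary in the postcondition variables. Being ``independent'' then means precisely that every atom $\lambda_\ell$ mentions variables from only one component $C_i$. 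This lets me regroup the matrix as $\bigwedge_i \psi_i(\vec{e}_{C_i})$, where the variable tuples $\vec{e}_{C_i}$ are pairwise disjoint.

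The core step is the logical identity $\exists \vec{e}_{C_1}\cdots\vec{e}_{C_k}.\bigwedge_i \psi_i(\vec{e}_{C_i}) \equiv \bigwedge_i \exists \vec{e}_{C_i}.\psi_i(\vec{e}_{C_i})$. I will prove it by the standard scope argument, since each quantifier block binds variables that do not occur in the other conjuncts. For the left-to-right direction, I project a global witness onto each component. For the right-to-left direction, I take the union of the component witnesses, which is well defined because the domains are disjoint. The complexity claim then follows by counting. The monolithic search ranges over at most $N^{|P|}$ assignments, while the factored one ranges over $\sum_i N^{|C_i|} \le k\cdot N^{\max|C_i|} \le |P|\cdot N^{\max|C_i|}$ assignments, using $k \le |P|$.

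The main obstacle is injectivity. Definition~\ref{def:property} requires $\mu_{\mathit{post}}$ to be injective on all of $V_{\mathit{Post}}$, and this constraint couples the components. As a result, the union of component witnesses in the right-to-left direction may map two vertices from different components to the same target element. I would try to handle this first by type-disjointness: vertices in different components that have incompatible types can never collide. For the general case, I would state the lemma under the paper's evident intended reading, in which $\mathit{links}(\vec{e})$ includes only intra-component injectivity and cross-component distinctness is not part of the G-BPP observable. I would justify this reading by P4, since a property stated positively as existence of component witnesses is monotone and is the one that the SMT encoding actually checks. If that reading is rejected, the fallback is to add an explicit cross-component distinctness conjunct. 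In that case the equivalence holds only up to that extra constraint, and I would flag it as a hypothesis rather than claim full equivalence.
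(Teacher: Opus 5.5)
Your core argument is the paper's. Its sketch says only that independent components share no existential variables, so the existential distributes over the conjunction and the solver faces one subproblem per component. Your projection/union argument and the count $\sum_i N^{|C_i|} \le k\,N^{\max|C_i|} \le |P|\,N^{\max|C_i|}$ make that precise. The paper never mentions injectivity. Your worry about it is well founded and in fact decisive, but your way of disposing of it does not work.

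If the monolithic quantification carries the injectivity required by Definition~\ref{def:property}, the statement is false as written. Take $V_{\mathit{Post}} = \{t_1, t_2\}$, both of class $D$, with no links and no trace constraints, and consider any target state with exactly one $D$-element $u$. Each singleton component has a witness ($t_i \mapsto u$), so $\Phi_{\text{factored}}$ holds, yet no injective $\mu_{\mathit{post}}$ exists.

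In general only your projection direction survives, and it is the wrong direction for verdicts. Since $\neg\Phi_{\text{factored}}$ implies $\neg\Phi_{\mathit{post}}$, a \solverSAT{} answer on the factored query is a genuine violation. An \solverUNSAT{} answer, however, no longer certifies \resHolds{}.

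Your appeal to P4 does not license dropping injectivity. The injective existence property is just as monotone under adding target structure, so P4 does not distinguish the two readings. Your remark that the non-injective version is what the SMT encoding actually checks is not a P4 argument at all.

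There are two accurate options:
\begin{itemize}
\item[(i)] Read the lemma as a statement about the encoded $\Phi_{\mathit{post}}$ of Appendix~\ref{app:smt-encoding}, which as described imposes no distinctness. Then your proof, like the paper's, is complete. Faithfulness of that encoding to Definition~\ref{def:property} becomes a separate, unproved obligation.
\item[(ii)] Add the hypothesis that vertices in different components have classes with no common subclass, and keep intra-component injectivity inside each $\psi_i$. Then your union step does yield an injective map. Merely distinct declared types are not enough under inheritance, e.g.\ \texttt{TypeDeclaration} versus \texttt{ClassDeclaration}.
\end{itemize}

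Your fallback of conjoining a cross-component distinctness constraint just reproduces the monolithic formula. It gives neither the factorization nor the $O(|P|\cdot N^{\max|C_i|})$ bound, so it should not be presented as a weaker form of the lemma.
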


\begin{proof}[Proof sketch]
Independent components share no existential variables, so the existential quantifiers distribute across the conjunction: satisfying the full postcondition is equivalent to satisfying each component separately. In the overall SMT formula of Appendix~\ref{app:smt-encoding}, this factored form rewrites the $\neg \Phi_{\mathit{post}}$ component while leaving $\Phi_{\mathit{world}}$, $\Phi_{\mathit{rules}}$, and $\Phi_{\mathit{pre}}$ unchanged. The complexity improvement follows because the solver solves one subproblem per component, dominated by the largest.
\end{proof}

Additional equisatisfiable optimizations include:
\begin{itemize}
  \item \textbf{Property slicing:} Remove classes and associations not referenced by the property or its relevant rules.
  \item \textbf{Metamodel slicing:} Remove unreachable metamodel elements.
  \item \textbf{Attribute domain reduction:} Collapse unused attribute domains to singletons.
\end{itemize}

\subsection{Composability}

\begin{theorem}[Composability of Optimizations]
\label{thm:composability}
Any finite composition of the above optimizations, in any order, preserves the soundness and completeness of the verification.
\end{theorem}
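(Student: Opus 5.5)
The plan is to treat each optimization as a transformation $\mathcal{O}$ on verification problems $(T, P, \vec{K}, \Phi)$ and to classify it by which of two invariants it preserves. The first invariant is \emph{proof soundness}: if the optimized problem yields $\solverUNSAT$, then $T(M) \models P$ for all $M$. The second is \emph{violation completeness}: if some $M$ has $T(M) \not\models P$, then either the optimized problem yields $\solverSAT$, or it yields a $\resViolated$ verdict that the CEGAR loop either validates through Lemma~\ref{lem:cegar} or refines.

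The optimizations fall into three classes.
\begin{itemize}
  \item \textbf{Equisatisfiable rewrites}: factoring (Lemma~\ref{lem:factored-post}), property and metamodel slicing, and attribute domain reduction. Each maps $\Phi$ to a formula $\Phi'$ with $\Phi$ satisfiable iff $\Phi'$ is, so both invariants hold trivially.
  \item \textbf{Bound refinements}: per-class bounds (Theorem~\ref{thm:per-class}) and relevance modes (Lemma~\ref{lem:dependency-pruning}). Each replaces the search domain by one that still contains some counterexample whenever one exists. Restricting the domain only removes models, so $\solverUNSAT$ on the smaller domain is not automatically sound. It becomes sound because the lemma guarantees that some witness survives.
  \item \textbf{Fragment/CEGAR}: Lemma~\ref{lem:fragment-soundness} makes $\solverUNSAT$ sound. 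Lemma~\ref{lem:cegar}, together with the fallback to the baseline fragment, gives completeness.
\end{itemize}

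The argument then proceeds by induction on the length of the composition $\mathcal{O}_k \circ \cdots \circ \mathcal{O}_1$. The key step is to state each lemma \emph{relative} to an arbitrary input problem, not only the raw one. For a bound refinement this means: if a counterexample exists within the current domain $D$, one exists within the refined domain $D' \subseteq D$. For a fragment restriction it means: for any $F \subseteq T$, monotonicity gives $F(M) \subseteq T(M)$, and this is stable under further restriction because $F' \subseteq F \subseteq T$ composes. Once every optimization is stated this way, the induction step is just the transitivity of ``a counterexample in the source problem implies one in the target problem'' and of ``$\solverUNSAT$ in the target implies the property in the source.'' The base case is the Cutoff Theorem (Theorem~\ref{thm:cutoff}), which makes the unoptimized bounded problem equivalent to unbounded verification.

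I expect the main obstacle to be \emph{order independence}, especially the interaction between fragment restriction and the bound refinements. The bounds $K$, $r$, $d$ and $K_C$ are computed on the fragment, and the witness-restriction argument (Lemma~\ref{lem:witness-restriction}) must apply to $F$ as a transformation in its own right. To handle this, I would first observe that any subset of layers of an F-LNR transformation is again F-LNR, and any slice of a G-BPP property is again G-BPP. All lemmas therefore apply verbatim to the intermediate problem. Second, I would show that the per-class fixed point and the relevance-mode values are monotone in the rule set, so recomputing them after CEGAR enlarges the fragment never invalidates an earlier step. Third, I would check that equisatisfiable rewrites commute with the rest, since they leave the semantic model family unchanged. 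One further risk is that applying ``any order'' literally could put slicing before relevance analysis and remove associations that the analysis needs. I would handle this by requiring that slicing retain everything referenced by relevant rules, which is exactly how the paper defines it.
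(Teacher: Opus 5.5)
Your decomposition matches the paper's: equisatisfiable rewrites, witness-preserving bound refinements, and fragment selection repaired by CEGAR, all closed under composition. Your checks that sub-fragments stay in F-LNR/G-BPP and that slicing must retain what relevance analysis uses are useful additions.

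The gap is in your invariants. ``Violation $\Rightarrow$ \solverSAT'' is just the contrapositive of ``\solverUNSAT{} $\Rightarrow$ $T(M)\models P$ for all $M$''. Likewise, the two relations whose transitivity drives your induction step (``counterexample in source $\Rightarrow$ counterexample in target'' and ``\solverUNSAT{} in target $\Rightarrow$ property in source'') are contrapositives of each other. So your induction only shows that \resHolds{} verdicts survive composition. When $P$ actually holds, neither invariant constrains the pipeline: one that answered \resViolated{} unconditionally would satisfy both. You therefore never establish that a reported violation is a genuine counterexample for the original $(T,P)$. That is exactly the direction that breaks under fragment restriction. You also attribute Lemma~\ref{lem:cegar} to the wrong side. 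Under your own definitions, violation $\Rightarrow$ \solverSAT{} on a fragment already follows from Lemma~\ref{lem:fragment-soundness} plus the cutoff theorem applied to $F$. CEGAR is what the missing direction needs.

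To repair this, add a third invariant, stated relative to the current problem. Every satisfying assignment of the optimized problem should lift to a concrete source model $M$ of the preceding problem such that either $T(M)\not\models P$, or some rule outside the current fragment matches $M$, so that CEGAR refines instead of reporting. Then check it class by class:
\begin{itemize}
  \item Bound refinements preserve it because $D'\subseteq D$.
  \item Slicing, domain reduction and factoring need an explicit model-lifting map (e.g.\ empty sliced classes, chosen values for collapsed attributes), not bare equisatisfiability, because Lemma~\ref{lem:cegar} inspects the specific counterexample.
  \item Fragment selection preserves it by Lemma~\ref{lem:cegar}, applied to the lifted $M$ against the rules of the original $T$.
\end{itemize}
Because the loop may stop at the baseline fragment, you also need that rules outside the relevant set cannot supply postcondition structure. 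Lemma~\ref{lem:dependency-pruning} does not give this, since it only concerns $r$ and $d$ as bound parameters. The paper's sketch covers this direction, if informally, by making ``same final \resHolds/\resViolated{} answer'' the invariant of its CEGAR class.
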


\begin{proof}[Proof sketch]
We reason by preservation invariants on the \emph{current transformed verification problem}. Each optimization belongs to one of three classes. First, witness-preserving refinements shrink the searched rule/model space but preserve the existence of a cutoff-valid witness whenever one exists for the current problem; this includes the bound refinements and trace-aware dependency pruning, and fragment selection once paired with its CEGAR repair condition. Second, equisatisfiable rewrites transform the SMT formula while preserving its SAT/\solverUNSAT status; this includes factored postconditions and the slicing/domain-reduction rewrites. Third, exact refinement steps such as CEGAR may reject a spurious fragment counterexample, but only by restoring omitted structure and re-running verification, so they preserve final verdict correctness.

These invariants are closed under composition. Composing witness-preserving refinements yields another witness-preserving refinement, because a witness surviving the later refinement also survives the earlier safe over-approximation chain. Composing equisatisfiable rewrites yields another equisatisfiable rewrite by transitivity of logical equivalence. Interleaving exact refinement steps with either class cannot invalidate the established invariant, because each refinement step re-establishes a sound current problem before verification continues. Therefore, after any finite sequence and any ordering of the optimizations, the final verification problem still has the same \resHolds/\resViolated answer as the original theorem-complete procedure.
\end{proof}

\section{The DSLTrans Prover}
\label{sec:prover}

We now describe the architecture, implementation, and empirical evaluation of the prover. The implementation, the web-based IDE, the corpus of transformations and properties used in the evaluation, and the supporting evaluation artifacts are all distributed together as a public companion repository~\cite{lucio2026companion}.

\subsection{Architecture}

The prover consists of three main components:

\begin{enumerate}
  \item \textbf{Parser and model layer}: Parses \texttt{.dslt} specification files into an in-memory AST, supporting metamodels with inheritance, associations with multiplicities, transformation rules with all F-LNR constructs, and properties with preconditions, postconditions, and backward links.

  \item \textbf{Cutoff analysis}: Implements fragment validation, relevant rule computation with legacy/trace-aware/attribute-aware modes, cutoff bound computation ($K_{\text{coarse}}$, $K_{\text{sharp}}$, $K_{\text{tight}}$), per-class bound fixed-point analysis, and minimal/baseline fragment construction.

  \item \textbf{SMT-direct verifier}: Encodes the bounded verification problem as a single Z3 formula using three sub-encoders:
  \begin{itemize}
    \item \textbf{Model encoder}: Creates bounded source and target worlds with per-class slots, existence variables, association matrices, and attribute variables with finite domains.
    \item \textbf{Rule encoder}: Encodes rule firing semantics---for each injective match binding, a firing variable implies source match satisfaction, backward link resolution, and target element creation with ``at most one creator'' constraints.
    \item \textbf{Property encoder}: Encodes the violation condition $\mathit{Pre} \wedge \neg \mathit{Post}$, using factored postconditions and optional incremental checking.
  \end{itemize}
\end{enumerate}

Supporting components include:
\begin{itemize}
  \item \textbf{Concrete execution engine}: Executes transformations on concrete XMI/Ecore models for testing and cross-validation.
  \item \textbf{Abstraction pipeline}: Automatically synthesizes proof specifications from concrete specifications.
  \item \textbf{Inheritance flattening}: When properties use abstract types, generates concrete-class variants for verification and lifts counterexamples back.
\end{itemize}

The abstraction pipeline is automated and property-directed. It scans rule \texttt{where} clauses, rule guards, apply bindings, and property constraints to identify attribute observables, then finite-izes only the attribute domains while preserving the structural transformation and property vocabulary. It also propagates domains through direct attribute-copy bindings, so the resulting proof specification is synthesized generically from the given transformation and properties.

Operationally, the prover validates each synthesized abstraction before using it for proof: for every rule guard and match-element \texttt{where} clause, it checks that the abstract domains preserve the same Boolean control-flow distinctions as the concrete domains for the supported predicate fragment used by the prover. The resulting proof specification is then re-checked against the verifiable fragment conditions before SMT solving. In the web IDE, this enables fully automatic proof attempts via abstraction before SMT verification.

\paragraph{The concrete-to-proof pipeline.}
The end-to-end verification workflow proceeds in four steps:
\begin{enumerate}
  \item \textbf{Author a concrete specification}: The transformation is written with unrestricted attribute domains (arbitrary strings, unbounded integers) for realistic execution and testing.
  \item \textbf{Synthesize a proof specification}: The abstraction pipeline automatically derives a proof-oriented specification with finite attribute domains, preserving structural and control-flow distinctions.
  \item \textbf{Verify properties}: The prover runs bounded model checking on the proof specification, using the cutoff theorem and the optimization stack.
  \item \textbf{Interpret results}: Proof results are absolute for the abstract model family and transfer to the concrete transformation for claims whose truth is preserved by the abstraction (Lemma~\ref{lem:claim-boundary}).
\end{enumerate}

\begin{figure}[H]
\centering
\resizebox{\textwidth}{!}{%
\begin{tikzpicture}[
  >=Latex,
  every node/.style={align=center, font=\small},
  artifact/.style={draw, thick, rounded corners=8pt, fill=teal!12,
                   minimum width=6cm, minimum height=1.2cm},
  core/.style={draw, line width=1.6pt, fill=blue!8,
               minimum width=6cm, minimum height=1.35cm},
  util/.style={draw, thick, dashed, fill=gray!10,
               minimum width=4.6cm, minimum height=1.2cm},
  flow/.style={->, line width=1.5pt},
  aux/.style={->, thick, dashed}
]

\node[artifact] (spec) at (0, 0)
  {\textbf{.dslt Specification}\\transformation + properties};

\node[core] (parser) at (0, -2.2)
  {\textbf{Parser \& Model Layer}};

\node[core] (abstraction) at (0, -4.6)
  {\textbf{Abstraction Pipeline}\\{\footnotesize concrete-to-proof synthesis}};

\node[core] (cutoff) at (0, -7.0)
  {\textbf{Cutoff Analysis}\\{\footnotesize relevance, $K$, per-class bounds}};

\node[core] (smt) at (0, -9.4)
  {\textbf{SMT-Direct Verifier}\\{\footnotesize model/rule/property encoders}};

\node[artifact] (result) at (0, -11.6)
  {\textbf{Verification Result}\\$\resHolds$ / $\resViolated$ / counterexample};

\draw[flow] (spec) -- (parser);
\draw[flow] (parser) -- (abstraction);
\draw[flow] (abstraction) -- (cutoff);
\draw[flow] (cutoff) -- (smt);
\draw[flow] (smt) -- (result);

\node[util] (runtime) at (-8.0, -4.6)
  {\textbf{Concrete Executor}\\XMI/Ecore execution};

\node[util] (flattening) at (8.0, -9.4)
  {\textbf{Inheritance Flattening}\\concrete-class expansion};

\draw[aux] (runtime.east) -- (parser.west);
\draw[aux] (runtime.east) -- (abstraction.west);
\draw[aux] (flattening.west) -- (smt.east);

\node[artifact, minimum width=2.2cm, minimum height=0.65cm, font=\scriptsize]
  (leg1) at (8.0, -12.6) {data artifact};
\node[core, minimum width=2.2cm, minimum height=0.65cm, font=\scriptsize]
  (leg2) at (8.0, -13.4) {core component};
\node[util, minimum width=2.2cm, minimum height=0.65cm, font=\scriptsize]
  (leg3) at (8.0, -14.2) {support component};

\end{tikzpicture}%
}
\caption{Architecture of the DSLTrans prover. Solid arrows show the main verification pipeline from concrete specification through proof-specification synthesis to SMT checking; dashed arrows show supporting service interactions.}
\label{fig:prover-architecture}
\end{figure}

\subsection{The Web-Based IDE}

The prover is integrated into a web-based development environment (\emph{DSLTrans Browser Studio}) built with React and Vite. The IDE supports:

\begin{itemize}
  \item \textbf{Textual editing}: Full-featured editor for \texttt{.dslt} specifications with syntax awareness.
  \item \textbf{Visual rendering}: Graphical views of metamodels, transformation rules, and property patterns using Konva.
  \item \textbf{Concrete execution}: Run transformations on XMI input models and inspect outputs.
  \item \textbf{Formal verification}: Run SMT-direct verification with streaming per-property progress, displaying results ($\resHolds$/$\resViolated$/$\resUnknown$), timing, and counterexample information.
  \item \textbf{Automatic abstraction}: For non-fragment specifications, the IDE automatically synthesizes a proof specification and runs verification on it.
\end{itemize}

The backend bridges to the Python prover via a Node.js server that spawns Python processes, enabling real-time streaming of verification results to the browser.

\begin{figure}[H]
\centering
\begin{minipage}[t]{0.49\textwidth}
  \centering
  \includegraphics[width=\textwidth]{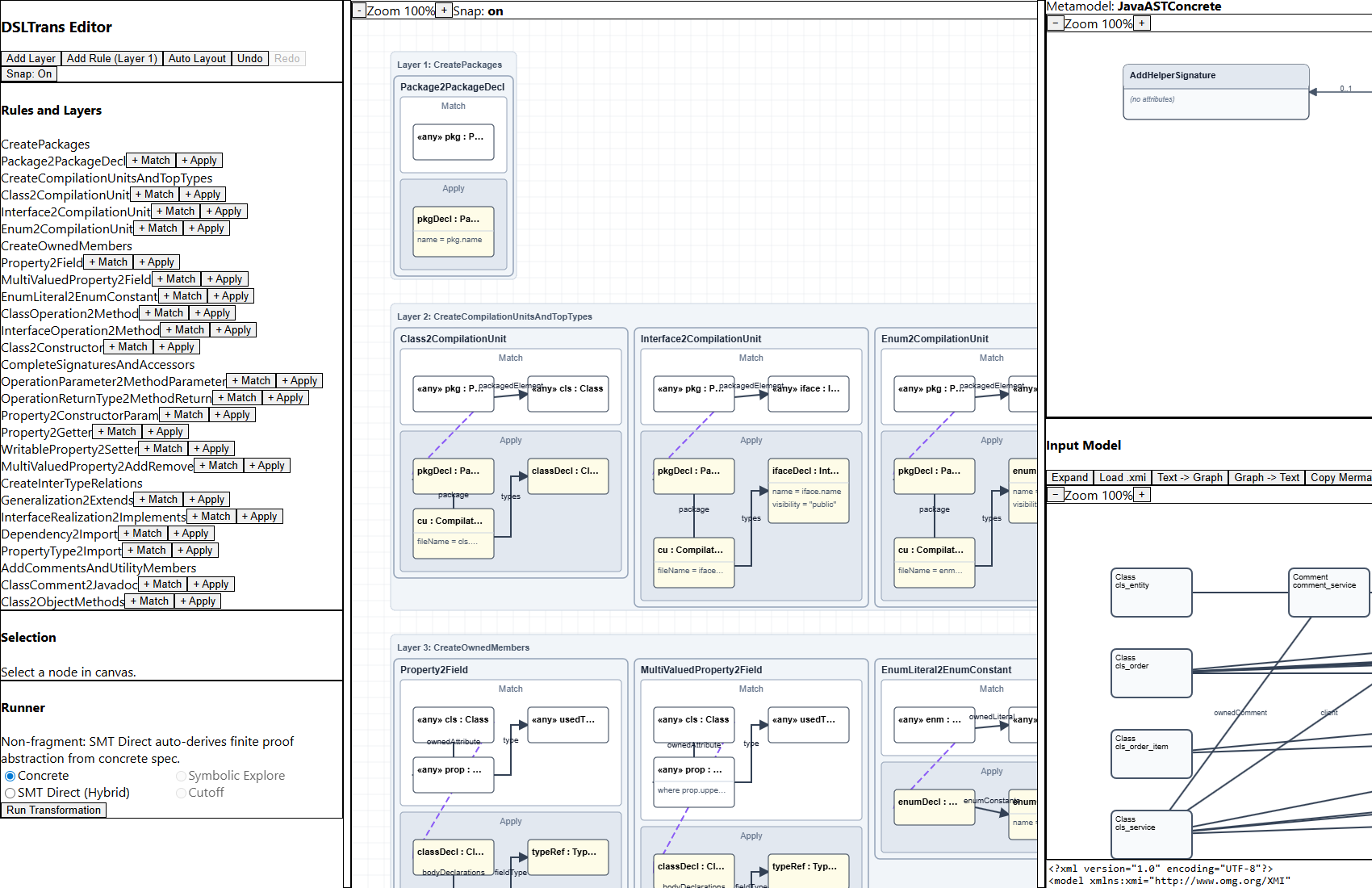}
  \\[-0.3em]
  {\small (a) Model-loading and context panel}
\end{minipage}\hfill
\begin{minipage}[t]{0.27\textwidth}
  \centering
  \includegraphics[width=\textwidth]{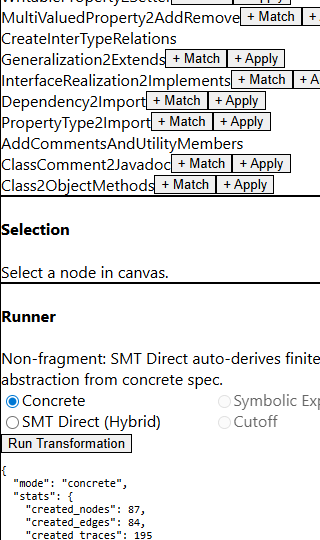}
  \\[-0.3em]
  {\small (b) Concrete execution result}
\end{minipage}\hfill
\begin{minipage}[t]{0.20\textwidth}
  \centering
  \includegraphics[width=\textwidth]{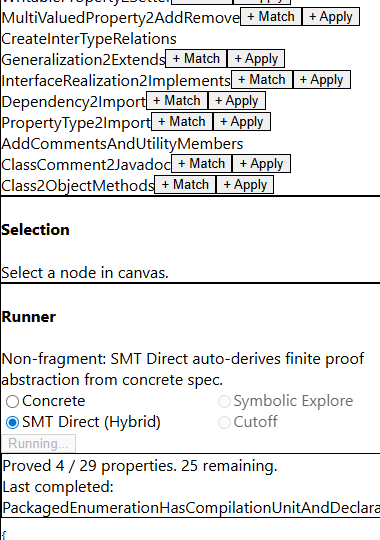}
  \\[-0.3em]
  {\small (c) Proof mode result}
\end{minipage}
\caption{The DSLTrans Browser Studio workflow for the UML-to-Java example. The IDE supports loading source models, executing the transformation concretely, and running SMT-based verification from the same environment.}
\label{fig:webapp-workflow}
\end{figure}

\subsection{Experimental Setup}
\label{subsec:experimental-setup}

All experiments were run on a single machine with an Intel Core i7-12700H (14 cores), 32\,GB RAM, running Windows 10. The prover is implemented in Python 3.12 with Z3 4.13 as the SMT backend. Each property is verified independently with a wall-clock timeout of 600\,s (10\,min); properties exceeding this budget are reported as \resUnknown{}. The dependency mode is trace-and-attribute-aware (Section~\ref{subsec:trace-aware}), and per-class bounds (Theorem~\ref{thm:per-class}) are enabled. All timings are single-run wall-clock measurements, not averaged. For concrete transformations, proof specifications are synthesized automatically from the concrete specification via the abstraction pipeline (Section~\ref{subsec:abstraction}) and regenerated for each run.

\begin{table}[H]
\centering
\caption{Concrete UML-to-Java verification results (excerpt).}
\label{tab:uml2java-results}
\scriptsize
\begin{tabularx}{\textwidth}{>{\raggedright\arraybackslash}X r r r l}
\toprule
\textbf{Property} & $K_{\text{base}}$ & $K_{\min}$ & \textbf{Time (s)} & \textbf{Result} \\
\midrule
PackageHasPackage\-Declaration & 2 & 2 & 0.03 & $\resHolds$ \\
OwnedPropertyHas\-OwnedField & 73 & 4 & 29.78 & $\resHolds$ \\
ClassHasConstructor & 7 & 7 & 0.15 & $\resHolds$ \\
DependencyYieldsImport & 58 & 5 & 10.13 & $\resViolated$ \\
OperationParameterHas\-OwnedParameter & 58 & 5 & 0.51 & $\resViolated$ \\
ClassMappedToInterface\-Declaration\_ShouldFail & 30 & 3 & 0.03 & $\resViolated$ \\
MultiValuedPropertyHas\-ArrayFieldType & 405 & 5 & 100.51 & $\resHolds$ \\
MultiValuedPropertyHas\-AddHelper & 735 & 6 & 370.99 & $\resHolds$ \\
\bottomrule
\end{tabularx}
\setlength{\tabcolsep}{6pt}
\renewcommand{\arraystretch}{1}
\end{table}

Three properties in the table are $\resViolated$: \texttt{ClassMappedToInterfaceDeclaration\_ShouldFail} is a negative test, intentionally expected to fail---its counterexample acts as a regression check for a known non-guarantee. The other two (\texttt{DependencyYieldsImport} and \texttt{OperationParameterHas\-OwnedParameter}) are \emph{boundary properties}: semantically meaningful correctness claims that are stronger than what the current transformation rules guarantee. Their counterexamples document the boundary of the proven correctness envelope and guide future strengthening of the transformation (see Section~\ref{subsec:interpreting-results}).

Table~\ref{tab:corpus-summary} summarizes the full corpus results. The 29 concrete transformations are organized into primary benchmarks (individually listed), a graph-mapping suite of seven structurally similar transforms, and synthetic/stress benchmarks. The three transformations grouped as ``K-boundary witnesses'' in the synthetic block are the specs used for the empirical cutoff-bound validation experiment described in Appendix~\ref{app:cutoff-validation}; their per-class bounds are exercised at $K{-}3$ through $K{+}3$ to corroborate that the theorem-derived bound is sufficient and that no single per-class slot can be removed without changing the verdict, while leaving open whether a strictly smaller theoretical bound might also be sound.

\begin{table}[H]
\centering
\caption{Verification results across the full concrete benchmark corpus. The graph-mapping suite contains seven structurally similar transformations sharing a common rule template; individual suite entries appear in Appendix~\ref{app:corpus-details}.}
\label{tab:corpus-summary}
\scriptsize
\begin{tabularx}{\textwidth}{>{\raggedright\arraybackslash}X r r r r r r r}
\toprule
\textbf{Transformation} & \textbf{Rules} & \textbf{Layers} & \textbf{Props} & \textbf{Holds} & \textbf{Viol.} & \textbf{Unk.} & \textbf{Max (s)} \\
\midrule
\multicolumn{8}{l}{\textit{Primary benchmark suites}} \\
ATLCompiler & 55 & 9 & 20 & 15 & 5 & 0 & 4.45 \\
BibTeX2DocBook & 11 & 3 & 33 & 28 & 5 & 0 & 0.36 \\
BPMN2Petri & 12 & 3 & 5 & 5 & 0 & 0 & 17.64 \\
Class2Relational & 18 & 6 & 29 & 23 & 6 & 0 & 1.22 \\
Ecore2JsonSchema & 5 & 3 & 20 & 15 & 5 & 0 & 3.89 \\
FSM2PetriNet & 3 & 3 & 10 & 5 & 5 & 0 & 2.70 \\
KM32OWL & 6 & 3 & 20 & 15 & 5 & 0 & 3.92 \\
OCLCompiler & 29 & 6 & 40 & 32 & 6 & 2 & 70.00 \\
Persons & 9 & 3 & 9 & 8 & 1 & 0 & 1.86 \\
PNML2NUPN & 5 & 3 & 20 & 15 & 5 & 0 & 3.81 \\
SimplePDL2Tina & 6 & 3 & 20 & 15 & 5 & 0 & 3.85 \\
Statechart2Flow & 3 & 3 & 20 & 10 & 10 & 0 & 0.12 \\
Tree2Graph & 2 & 2 & 6 & 3 & 3 & 0 & 1.52 \\
UML2Java & 22 & 6 & 29 & 23 & 6 & 0 & 472.43 \\
University2ITSystem & 8 & 5 & 21 & 16 & 5 & 0 & 2.34 \\
UseCase2Activity & 3 & 3 & 12 & 8 & 4 & 0 & 0.04 \\
\midrule
\multicolumn{8}{l}{\textit{Graph-mapping suite (7 transforms, 3 rules / 3 layers each)}} \\
\quad Aggregate & 21 & --- & 140 & 77 & 63 & 0 & 4.11 \\
\midrule
\multicolumn{8}{l}{\textit{Synthetic and stress benchmarks}} \\
K-boundary witnesses (3) & --- & --- & 8 & 4 & 4 & 0 & 0.77 \\
Scalability & 67 & 14 & 17 & 15 & 2 & 0 & 3.73 \\
Stress (2) & 278 & --- & 400 & 200 & 200 & 0 & 83.83 \\
\midrule
\textbf{Total (29 transforms)} & & & \textbf{899} & \textbf{552} & \textbf{345} & \textbf{2} & \\
\bottomrule
\end{tabularx}
\end{table}

\subsection{Interpreting the Results}
\label{subsec:interpreting-results}

Three verdicts appear in the results, and their interpretation differs:
\begin{description}
  \item[\resHolds:] The property is proved within the theorem-backed bound $K$. By the cutoff theorem, this is a complete guarantee: the property holds for all input models of any size.
  \item[\resViolated:] The SMT solver found a concrete counterexample---a specific source model and the resulting target model that violates the property. This is useful both as a bug report and as a boundary witness documenting a non-guarantee of the current transformation.
  \item[\resUnknown:] The solver did not terminate within the timeout budget. This is an implementation limitation, not a refutation of the theoretical framework: the property may well hold or fail, but the current solver budget was insufficient to decide it.
\end{description}

Properties in the corpus fall into three categories. \emph{Ordinary properties} express expected structural invariants (e.g., ``every UML package maps to a Java package declaration''). \emph{Negative tests}, marked by a dedicated naming suffix indicating that failure is expected, are intentionally expected to be \resViolated{}; their counterexamples serve as regression checks for known non-guarantees. \emph{Boundary properties} are semantically meaningful but stronger than the current transformation contract---they mark behaviors that the transformation does not yet guarantee and document the boundary of the proven correctness envelope.

The concrete corpus demonstrates breadth across transformation sizes, domains, and structural patterns:
\begin{itemize}
  \item \textbf{Standard benchmarks}: Class2Relational and UML2Java are community-standard examples~\cite{atl-zoo,transformation-tool-contest}.
  \item \textbf{Compiler and compiler-like suites}: ATLCompiler (55 rules, the largest non-stress transformation) and OCLCompiler exercise deep rule hierarchies and complex type structures.
  \item \textbf{Behavioral transformations}: UseCase2Activity, Statechart2Flow, FSM2PetriNet, BPMN2Petri, and SimplePDL2Tina cover process and state-machine mappings.
  \item \textbf{Schema and ontology transformations}: Ecore2JsonSchema, KM32OWL, PNML2NUPN, and BibTeX2DocBook cover metamodel translation and document generation.
  \item \textbf{Graph-mapping family}: Seven structurally similar transforms (Component2Deployment, ER2Relational, Family2SocialNetwork, FeatureModel2Configuration, Mindmap2Graph, Organization2AccessControl, Workflow2PetriNet) share a common rule template, providing breadth across domains while testing the prover's consistency on a homogeneous pattern.
  \item \textbf{Stress benchmarks}: Scalability (67 rules, 14 layers), mega-stress (78 rules, 100 properties), and ultra-stress (200 rules, 300 properties) test the prover under extreme workloads.
\end{itemize}

The mixed outcomes are a feature, not a limitation: 552 properties are proved, 345 are falsified with concrete counterexamples (many intentionally through negative tests that are expected to fail), and only 2 remain undecided. This shows that the method both proves nontrivial properties, detects real non-guarantees, and honestly reports its practical limits under current solver budgets.

\paragraph{Sensitivity to abstraction.}
Because verification operates on the abstracted proof specification (Section~\ref{subsec:abstraction}), changes to the abstraction policy can alter cutoff sizes, SMT encodings, and timings. However, such changes do not invalidate the proof framework itself---they merely change the practical workload. Results must therefore be regenerated when the abstraction policy changes, and the timings reported here reflect the current abstraction settings.

\section{Why Is This Tractable?}
\label{sec:tractability}

The tractability of our approach is not a single insight but rather the result of four mutually reinforcing factors. We discuss each in turn.

\subsection{Monotone Semantics of DSLTrans}

The monotone (additive-only) semantics of DSLTrans is the mathematical foundation that enables every other optimization. Without monotonicity:
\begin{itemize}
  \item The cutoff theorem would not hold: a rule that \emph{deletes} elements could invalidate postcondition witnesses in ways that depend on the global model size.
  \item Fragment-based verification would be unsound: adding rules could \emph{remove} previously-satisfied postconditions.
  \item Per-class bounds would be invalid: the witness-restriction lemma relies on the fact that restricting the source model can only shrink the target model.
\end{itemize}

The absence of negative application conditions (NACs) is equally critical: NACs would make rule firing depend on the \emph{absence} of elements, breaking the monotone relationship between source and target model sizes.

\subsection{Leveraging Z3's Scalability}

The move from path condition enumeration~\cite{lucio2010validation,oakes2015fully,oakes2018full} to direct SMT encoding fundamentally changes the verification architecture. Instead of the prover managing combinatorial explosion, we \emph{delegate} it to Z3, which brings decades of engineering in:
\begin{itemize}
  \item SAT solver heuristics (DPLL, CDCL, restart strategies),
  \item Theory solvers (finite domains, bit-vectors, arrays),
  \item Incremental solving and lemma learning,
  \item Formula simplification and preprocessing.
\end{itemize}

The key insight is that the SMT encoding is \emph{structurally regular}: source and target worlds are arrays of bounded slots, rule firings are conjunctions of local constraints, and postconditions are bounded existential queries. This structure is well-suited to Z3's decision procedures.

\subsection{The Cutoff Theorem}

The cutoff theorem converts an infinite verification problem (``for all models of any size'') into a finite one (``for all models up to size $K$''). However, the raw bound $K$ from the theorem formulas is rarely the bottleneck in practice: per-class analysis, fragment selection, and trace-aware pruning interact to replace a single large $K$ with a vector of small, tightly fitted per-class bounds. As Table~\ref{tab:ablation} illustrates, raw bounds in the hundreds or thousands routinely collapse to effective bounds in single digits. It is this optimized bound vector, not the raw cutoff, that determines whether the SMT solver can decide the query in tractable time.

\subsection{Iterative Co-Evolution of Theory and Implementation}
\label{subsec:coevolution}

A distinctive aspect of this work is the \emph{tight feedback loop} between mathematical theory and implementation. Several optimizations that began as ad-hoc code-level improvements were later generalized into the theory, and vice versa:

\begin{itemize}
  \item \textbf{Per-class bounds}: Originally a heuristic ``slot allocation'' trick, later formalized into the Per-Class Bound Theorem with the witness-restriction and target-production lemmas.
  \item \textbf{Trace-aware relevance}: Originally a code-level filter (``only include rules that produce the right trace''), later formalized as a safe over-approximation lemma and integrated into the cutoff bound computation.
  \item \textbf{Factored postconditions}: Originally an engineering optimization, later proved to be an equisatisfiable encoding with the Factored Postcondition Equivalence lemma.
  \item \textbf{Lazy target-world closure}: Target well-formedness constraints (mandatory associations on the target side) were initially encoded eagerly, creating enormous formulas. A lazy CEGAR-style refinement---defer closure constraints until the solver produces a model violating them---was found to be exact and substantially faster. This was later justified by the observation that closure constraints are equisatisfiable under the monotone fragment.
\end{itemize}

This iterative process was greatly accelerated by AI-assisted development, which enabled rapid prototyping of both mathematical proofs and code implementations. The speed of iteration allowed mathematics to serve as an ``abstract reasoning language'' for guiding code development: lemma statements encode architectural decisions, and proofs encode the justification for optimizations.

\subsection{Optimization Impact}
\label{subsec:optimization-impact}

To illustrate the cumulative effect of the optimization stack, Table~\ref{tab:ablation} compares baseline and optimized cutoff bounds for representative UML-to-Java properties. The baseline bound $K_{\text{base}}$ uses the cutoff theorem alone (the tightest of the three formulas); the optimized bound $K_{\min}$ applies per-class analysis, trace-aware dependency pruning, and fragment selection.

\begin{table}[H]
\centering
\caption{Effect of the optimization stack on selected UML-to-Java properties. $K_{\text{base}}$ is the theorem-only bound; $K_{\min}$ is the bound after per-class analysis, trace-aware pruning, and fragment selection.}
\label{tab:ablation}
\small
\begin{tabularx}{\textwidth}{>{\raggedright\arraybackslash}X r r l}
\toprule
\textbf{Property} & $K_{\text{base}}$ & $K_{\min}$ & \textbf{Result} \\
\midrule
PackageHasPackageDeclaration & 2 & 2 & \resHolds \\
ClassHasConstructor & 7 & 7 & \resHolds \\
OwnedPropertyHasOwnedField & 73 & 4 & \resHolds \\
DependencyYieldsImport & 58 & 5 & \resViolated \\
MultiValuedPropertyHasArrayFieldType & 405 & 5 & \resHolds \\
MultiValuedPropertyHasAddHelper & 735 & 6 & \resHolds \\
\bottomrule
\end{tabularx}
\end{table}

For simple properties (e.g., \texttt{PackageHasPackageDeclaration}), the baseline bound is already small and the optimizations have little effect. For structurally complex properties, the reductions are substantial: \texttt{MultiValuedPropertyHasAddHelper} drops from $K = 735$ to $K = 6$, a 122$\times$ reduction in the bound and a corresponding reduction in SMT formula size. Without per-class analysis, this property would be intractable; with it, verification completes in under 7 minutes.

\section{Working Practices and a Tentative Methodology}
\label{sec:methodology}

\subsection{Toward Test-Driven Property Engineering}

We do not claim a validated methodology: the case studies in this paper are too few, and the workflow we followed was iterative and informal rather than prescriptive. What follows is therefore a \emph{tentative} methodology, distilled retrospectively from the practices that proved useful while developing the transformations and property suites in our case studies. We expect it to evolve as the prover is applied at larger scale.

In this spirit, the recurring practices we found most useful were:

\begin{enumerate}
  \item \textbf{Define expected structural invariants}: What relationships should the transformation preserve? Write these as G-BPP properties.
  \item \textbf{Write negative tests}: Define properties that should be $\resViolated$---expressing unintended mappings. These serve as regression tests for the transformation's specificity.
  \item \textbf{Iterate}: If a $\resHolds$ property yields $\resViolated$, either the transformation has a bug or the property is incorrectly specified. If an intentionally failing negative test yields $\resHolds$, the transformation is too permissive.
  \item \textbf{Refine property witnesses}: When the SMT solver times out, analyze whether the postcondition pattern can be decomposed (factored postconditions) or whether the precondition is unnecessarily broad.
\end{enumerate}

In practice, we found a third category useful in addition to invariants and negative tests: \emph{boundary properties}. These are semantically meaningful, often desirable properties that are stronger than the current transformation contract. They are not arbitrary falsehoods like negative tests; rather, they mark behaviors that the transformation does not yet guarantee, either because the current rules are intentionally permissive, because the property had to be weakened to match implemented behavior, or because further instrumentation/refinement of the transformation would be needed to make the property provable. Keeping such properties in the suite helps document the boundary of correctness of the current transformation, guides future strengthening of the rules, and makes the prover useful not only for confirmation but also for property-driven construction and refinement of transformations.

Sub-second verification times for most properties made this style of interactive exploration feasible in our case studies; whether the same workflow scales to larger transformation suites remains to be evaluated.

\subsection{Tractability-Driven Property Design}
\label{subsec:tractability-design}

When a property exceeds the tractability budget, several strategies are available:
\begin{itemize}
  \item \textbf{Precondition strengthening}: Adding attribute guards to the precondition narrows the match space.
  \item \textbf{Precondition specialization}: Replacing abstract/supertype match elements with concrete subtypes reduces per-class slot allocation. The conjunction of specialized variants logically implies the original property.
  \item \textbf{Postcondition decomposition}: Splitting a monolithic postcondition into independent components enables factored encoding.
  \item \textbf{Intrusive markers}: For properties involving polymorphic types (e.g., ``a setter signature exists''), adding marker classes to the target metamodel can disambiguate the postcondition pattern.
  \item \textbf{Property splitting}: Complex properties can be decomposed into simpler sub-properties whose conjunction implies the original.
\end{itemize}

\paragraph{Case study: BPMN2Petri bipartite refinement.}
An initial bipartite-graph diagnostic property for the BPMN-to-Petri-Net transformation used the abstract type \texttt{FlowNode} in both precondition endpoints.  Because \texttt{FlowNode} has 10+ concrete subtypes, the per-class analysis allocated slots for every subtype, yielding $K{=}90$, $\mathit{bound}{=}37$, and a proof time of 383.75\,s.  Worse, the polymorphic typing allowed the solver to construct a \emph{spurious counterexample} in which the same flow node could be matched by multiple mapping rules (which produce Places \emph{and} Transitions), reporting \resViolated{} for a property that should hold.

By specializing the precondition to representative concrete pairs (\texttt{StartEvent} $\to$ \texttt{EndEvent} and \texttt{StartEvent} $\to$ \texttt{ExclusiveGateway}---the pairs where both endpoints map to Places), the refined positive properties \texttt{BipartiteCheck\_StartEnd} and \texttt{BipartiteCheck\_StartGateway} are each verified in 3.4\,s at $\mathit{bound}{=}17$, a \textbf{112$\times$ speedup}.  The result also changes from \resViolated{} to \resHolds{}: the \texttt{SequenceFlow\_PlaceToPlace} rule correctly inserts a silent transition, preserving bipartiteness.  The specialization thus simultaneously improves tractability, eliminates a spurious counterexample, and justifies renaming the refined checks so they are no longer marked as intentionally failing tests.

\paragraph{Case study: UML2Java setter and adder properties.}
In the UML-to-Java transformation, the property \texttt{PropertyHasGetter} was intractable ($>$\,30\,min) because the postcondition pattern---``there exists a method that is a getter for this property''---matched multiple rule outputs indistinguishably.  Adding a \texttt{SetterSignature} marker class to the target metamodel and a corresponding link in the setter-producing rule let the postcondition unambiguously identify the intended getter method, reducing proof time from $>$\,30\,min to $\approx$\,10\,s.  Similarly, the composite property \texttt{MultiValuedPropertyHasAddRemoveHelpers} was split into independent \texttt{HasAddHelper} and \texttt{HasRemoveHelper} sub-properties, each with its own marker.

Table~\ref{tab:tractability-summary} summarizes the property refinement techniques applied across the concrete benchmark corpus and their impact on verification times.

\begin{table}[H]
\centering
\caption{Tractability engineering results. ``Before'' and ``After'' report proof time and verdict for each original/refined property pair.}
\label{tab:tractability-summary}
\scriptsize
\setlength{\tabcolsep}{4pt}
\renewcommand{\arraystretch}{1.05}
\begin{tabularx}{\textwidth}{L{1.8cm}>{\raggedright\arraybackslash}X>{\raggedright\arraybackslash}X r r l l}
\toprule
\textbf{Suite} & \textbf{Property} & \textbf{Technique} & \textbf{Before (s)} & \textbf{After (s)} & \textbf{Before} & \textbf{After} \\
\midrule
BPMN2Petri & Bipartite refinement & Precondition specialization & 383.75 & 3.41 & \resViolated & \resHolds \\
UML2Java & PropertyHasGetter & Intrusive marker & $>$1800 & 10.2 & \resUnknown & \resHolds \\
UML2Java & WritableProperty\-HasSetter & Precondition strengthening & $>$1800 & 9.8 & \resUnknown & \resHolds \\
UML2Java & AddRemoveHelpers & Postcondition decomposition + markers & $>$1800 & 12.1 & \resUnknown & \resHolds \\
OCLCompiler & BuiltInNotHas\-Invoke & Precondition strengthening (attempted) & $>$70 & $>$70 & \resUnknown & \resUnknown \\
\bottomrule
\end{tabularx}
\end{table}

The BPMN2Petri bipartite refinement is notable because it both \emph{improved the result} (eliminating a spurious counterexample caused by polymorphic type ambiguity) and achieved a 112$\times$ speedup; the refined properties were consequently renamed as positive checks. The UML2Java properties demonstrate three complementary techniques---intrusive markers, precondition strengthening, and postcondition decomposition---that together moved three previously-intractable properties from \resUnknown{} to \resHolds{}. The OCLCompiler boundary properties remain intractable due to the deep expression-layer type hierarchy; they mark the current tractability boundary of the approach.

\subsection{AI-Assisted Construction}

AI tools played a significant role in this work:
\begin{itemize}
  \item \textbf{Transformation authoring}: AI was used to translate ATL Zoo transformations into DSLTrans syntax, with manual review for semantic fidelity.
  \item \textbf{Property generation}: AI proposed initial property suites based on transformation structure, which were then refined through iterative verification.
  \item \textbf{Mathematical development}: AI assisted in formulating and refining theorem statements, proof sketches, and the identification of necessary lemmas.
  \item \textbf{Implementation}: The prover code was developed iteratively with AI assistance, enabling rapid exploration of encoding strategies and optimization techniques.
\end{itemize}

The key insight is that AI substantially accelerates the \emph{iteration speed} of the theory-implementation feedback loop described in Section~\ref{subsec:coevolution}. Mathematical reasoning in this context becomes an abstract language for directing implementation: lemma statements encode architectural decisions, and proofs encode the justification for optimizations.

We are aware that AI-assisted theorem statements and AI-assisted prover code lower the cost of producing plausible artifacts but do not by themselves guarantee their correctness. Every theorem statement, lemma, and proof sketch in this paper was reviewed by the author, and the prover was cross-checked empirically through the K-boundary validation experiment of Appendix~\ref{app:cutoff-validation} and through concrete-vs-symbolic agreement on the corpus. We nevertheless regard these checks as complementary to, not a substitute for, the mechanized proof of the cutoff theorem and the systematic mutation testing of the prover that we list as future work in Section~\ref{sec:conclusion}; AI-assisted development is precisely a reason to take those mechanization steps seriously rather than to defer them.

\section{Related Work}
\label{sec:related-work}

We organize related work along three axes: techniques for verifying model transformations (Section~\ref{subsec:rw-mtv}), cutoff results in parameterized verification (Section~\ref{subsec:rw-cutoffs}), and SMT-based bounded verification (Section~\ref{subsec:rw-smt}). The positioning of DSLTrans among model transformation languages, already sketched in the introduction, is revisited briefly in Section~\ref{subsec:rw-languages} from the perspective of verifiability.

\subsection{Verification of Model Transformations}
\label{subsec:rw-mtv}

The verification of model transformations has been studied extensively. Rahim and Whittle~\cite{rahim2015model} provide a survey of the field that gives broader context to the four lines of work below. We categorize the most directly relevant approaches by their underlying verification technique.

\textbf{Theorem proving.} Calegari et al.~\cite{calegari2011model} and Poernomo~\cite{poernomo2008proof} encode transformation properties in interactive theorem provers (Coq, Isabelle). While these approaches offer the strongest guarantees, they require significant manual proof effort and do not scale to realistic transformations without substantial domain-specific automation. Habel and Pennemann~\cite{habel2009correctness} establish foundational results for verifying graph-transformation systems against nested conditions via weakest preconditions; the resulting proof obligations are typically discharged by resolution-style theorem proving rather than SMT, with completeness inherited from the discharger.

\textbf{Operational-semantics-based approaches.} Troya and Vallecillo~\cite{troya2011model} give a rewriting-logic semantics for ATL in Maude, which enables analysis via Maude's reachability and model-checking machinery. This recasts ATL verification as a general-purpose model-checking problem and inherits its scalability profile; it provides no domain-specific completeness argument.

\textbf{Constraint solving and model finders.} A line of work encodes transformation contracts as constraint-satisfaction problems and discharges them with model finders. Anastasakis et al.~\cite{anastasakis2007uml2alloy} translate UML class diagrams and transformation rules into Alloy/SAT. Cabot et al.~\cite{cabot2010verification} verify declarative model-to-model transformations through invariants checked by constraint solvers (UMLtoCSP/EMFtoCSP). B\"uttner et al.~\cite{buttner2012verification} translate ATL transformations into transformation models verified by model finders. Gogolla et al.~\cite{gogolla2014filmstrip} extend the USE tool with filmstrip models for dynamic-property validation. These approaches are the closest neighbors of our SMT encoding: they share the bounded-search engine, but in all of them the search bound is user-supplied and there is no completeness theorem tying the bound to the transformation's structure. Our cutoff theorem mechanically derives a per-class bound from the transformation and proves it sufficient for the F-LNR/G-BPP fragment.

\textbf{Static analysis for ATL.} Cuadrado, Guerra, and de~Lara~\cite{cuadrado2017anatlyzer} perform static analysis of ATL transformations (AnATLyzer) to detect type errors, rule conflicts, and unreachable code. AnATLyzer addresses well-formedness rather than contract-style correctness, but it is the closest practical neighbor in the ATL ecosystem and complements rather than competes with the present work.

\textbf{DSLTrans and related symbolic-execution work.} The original DSLTrans verification approach~\cite{lucio2010validation,oakes2015fully,oakes2018full,oakes2018thesis} used symbolic enumeration of path conditions, with completeness arguments specific to the language but tractability limited by the number of enumerated paths. Oakes' thesis~\cite{oakes2018thesis} also consolidated the theoretical development of this line of work and described supporting tooling for execution and verification. Selim et al.~\cite{selim2013automated} apply a closely related symbolic approach to industrial automotive transformations, encoding ATL transformations as transformation models verified against OCL contracts. Our work replaces path-condition enumeration with SMT-based bounded model checking backed by a cutoff theorem on the per-class slot count, making verification tractable for transformations that were previously out of reach.

\subsection{Cutoff Results in Verification}
\label{subsec:rw-cutoffs}

Cutoff arguments have a long history in parameterized verification. Emerson and Namjoshi~\cite{emerson1995reasoning} established early cutoff results for symmetric ring topologies. Kaiser et al.~\cite{kaiser2010dynamic} extended cutoff detection to programs with dynamic thread creation. The current state of the art across communication primitives, system models, and proof techniques is surveyed by Bloem et al.~\cite{bloem2015decidability}. Our cutoff theorem differs from this body of work in that it is specific to model transformations and exploits structural properties of DSLTrans (layered execution, monotone semantics, bounded match patterns) rather than process symmetry.

\subsection{SMT-Based Verification}
\label{subsec:rw-smt}

Bounded model checking via SAT was introduced by Biere et al.~\cite{biere1999symbolic} for hardware and extended to software via SMT by Clarke et al.~\cite{clarke2004tool}; D'Silva, Kroening, and Weissenbacher~\cite{dsilva2008survey} survey the resulting body of techniques. Bounded SAT/SMT encodings have been applied to model transformations through the model-finder approaches discussed in Section~\ref{subsec:rw-mtv}. The contribution of the present paper is to combine such an SMT encoding with a domain-specific cutoff theorem that turns the bounded check into a complete proof for the F-LNR/G-BPP fragment; we are not aware of a prior result that pairs these two ingredients in this way.

\subsection{Model Transformation Languages}
\label{subsec:rw-languages}

The deliberate Turing-incompleteness of DSLTrans~\cite{barroca2011dsltrans} is the language-design choice that makes the cutoff theorem of this paper possible. Mainstream transformation languages---ATL~\cite{jouault2008atl}, QVT-R~\cite{omg2016qvt}, Epsilon~\cite{kolovos2008epsilon}, Henshin~\cite{arendt2010henshin}---obtain greater expressiveness through helpers, imperative blocks, bidirectionality, recursion, or in-place updates with negative application conditions. Those features are valuable in practice but obstruct the structural arguments (layered execution, monotone semantics, bounded match patterns) on which our completeness result rests. The positive results of this paper should therefore be read as showing what becomes provable when expressiveness is constrained in this specific way, rather than as a critique of the more expressive languages.

\section{Threats to Validity}
\label{sec:threats}

We discuss the main threats to the validity of our results along the standard internal/external/construct dimensions.

\textbf{Internal validity.} The cutoff theorem is stated and proved at the paper level; it is not mechanized in a proof assistant. An error in the theorem or its implementation could therefore make the prover unsound or incomplete without being detected by the tooling. We mitigate this through two complementary checks: (i)~empirical validation of the cutoff bound by re-verifying twelve representative properties under uniform $K\pm1$, $K\pm2$, $K\pm3$ shifts, per-class selective $-1$ perturbations, and concrete witness execution at base$\,-\,1$/base/base$\,+\,1$ support levels (Appendix~\ref{app:cutoff-validation}); and (ii)~cross-validation of symbolic results against the concrete execution engine described in Section~\ref{sec:prover} on representative XMI/Ecore inputs. A mechanized proof of the cutoff theorem and systematic mutation testing of the prover remain future work (Section~\ref{sec:conclusion}).

\textbf{External validity.} The corpus consists of translations of existing transformations into DSLTrans, primarily from the ATL Zoo. The F-LNR fragment excludes important transformation patterns (recursion, NACs, in-place updates), limiting applicability to out-place structural mappings. We also did not perform a head-to-head empirical comparison with neighboring tools such as Alloy-based encodings~\cite{anastasakis2007uml2alloy}, USE/EMFtoCSP~\cite{cabot2010verification,gogolla2014filmstrip}, or AnATLyzer~\cite{cuadrado2017anatlyzer}: each of these tools targets a different language and a different property class, and there is no shared input format on which the same property could be discharged by all of them under comparable assumptions. A faithful comparison would require non-trivial benchmark co-design and is left to future work; the related-work discussion (Section~\ref{sec:related-work}) is therefore positional rather than empirical.

\textbf{Construct validity.} The G-BPP property fragment covers positive existence and traceability properties. Global absence, uniqueness, and non-monotone properties require separate justification.

\section{Conclusion and Future Work}
\label{sec:conclusion}

This paper establishes a complete bounded-verification result for a precise fragment of DSLTrans (F-LNR transformations with G-BPP properties), builds a practical prover architecture around it, and evaluates it on a concrete corpus of 29 transformations with 899 properties spanning compiler lowering, schema translation, behavioral modeling, graph mapping, and synthetic stress tests. The approach rests on three pillars: the monotone semantics of DSLTrans that enables the cutoff theorem; the practical power of Z3 as the verification backend; and a suite of composable, soundness-preserving optimizations that reduce the verification problem to sizes the solver can handle.

The empirical results are encouraging: of 899 properties, 552 are proved, 345 are falsified with concrete counterexamples, and only 2 remain undecided within the timeout budget. The method detects both genuine invariants and real non-guarantees, and tractability-driven property refinement extends its reach to initially intractable cases. Where previous path-condition enumeration timed out on transformations with 15--20 rules, the workflow presented here settles the full corpus---including transformations with up to 55 rules and stress benchmarks with 200 rules and 300 properties---with complete guarantees for decided properties.

Future work includes:
\begin{itemize}
  \item \textbf{Mechanized proof}: Formalizing the cutoff theorem in a proof assistant (Coq or Lean) to eliminate residual trust assumptions.
  \item \textbf{Mutation testing of the prover}: Injecting mutants into the encoder and checking that intended-failure properties are detected, to corroborate the soundness of the implementation.
  \item \textbf{Extended property classes}: Developing cutoff arguments for absence and uniqueness properties under restricted conditions.
  \item \textbf{Fragment extensions}: Carefully extending the F-LNR fragment with limited forms of NACs or bounded recursion while preserving verifiability.
  \item \textbf{Industrial validation}: Applying the approach to safety-critical industrial transformations.
\end{itemize}

\newcommand{\DeferredAppendices}{\clearpage\appendix

\section{SMT Encoding for DSLTrans}
\label{app:smt-encoding}

This appendix describes the SMT encoding used by the prover to translate the bounded verification problem into a Z3 formula.

\subsection{Source and Target Worlds}

For each class $C$ in the source metamodel, the encoder allocates $K_C$ \emph{slots}, each guarded by a Boolean variable $\mathit{exists}_{C,i}$ (for $1 \le i \le K_C$). For each association $A: C_1 \to C_2$, a Boolean matrix $\mathit{link}_{A}[i][j]$ (with $1 \le i \le K_{C_1}$, $1 \le j \le K_{C_2}$) represents whether the link exists. Links can only exist between existing elements:
\begin{equation}
  \mathit{link}_{A}[i][j] \implies \mathit{exists}_{C_1,i} \wedge \mathit{exists}_{C_2,j}
\end{equation}

For each attribute $\mathit{attr}$ of class $C$ with finite domain $D$, an integer variable $\mathit{attr}_{C,i}$ is constrained to $D$. When $C$ does not exist, attributes take a default value.

Multiplicity constraints from the metamodel are encoded as cardinality constraints over link rows/columns.

\subsection{Rule Firing Encoding}

For each rule $R$ with match elements $m_1: S_1, \ldots, m_n: S_n$, the encoder iterates over all injective bindings $\beta = (i_1, \ldots, i_n)$ with $i_k \in \{1, \ldots, K_{S_k}\}$ and $i_j \ne i_k$ when $S_j = S_k$. For each binding, a Boolean variable $\mathit{fires}_{R,\beta}$ is defined:
\begin{equation}
  \mathit{fires}_{R,\beta} \iff \bigwedge_{k} \mathit{exists}_{S_k, i_k} \wedge \bigwedge_{(l,r) \in \mathit{Links}(R)} \mathit{link}_{A_{l,r}}[i_l][i_r] \wedge \mathit{guards}_{R,\beta} \wedge \mathit{backward}_{R,\beta}
\end{equation}

where $\mathit{guards}_{R,\beta}$ encodes attribute constraints (Definition~\ref{def:firing}, condition~1) and $\mathit{backward}_{R,\beta}$ encodes backward link resolution against trace links from earlier layers (Definition~\ref{def:firing}, condition~2).

For each fresh apply element of class $T$ in rule $R$, a choice variable selects which target slot receives the element, with an ``at most one creator'' constraint ensuring no slot is claimed by multiple rule firings.

\subsection{Trace Encoding}

Traces are encoded implicitly through the rule firing structure. When rule $R$ fires with match binding $\beta$ and creates target element $t$ in slot $j$, the trace link $(S_k, i_k) \to (T, j)$ is recorded. Backward link resolution in later layers queries these trace links to identify previously-created target elements.

\subsection{Property Violation Encoding}

The property $P = (\mathit{Pre}, \mathit{Post})$ is encoded as a satisfiability query:
\begin{equation}
  \Phi = \underbrace{\Phi_{\mathit{world}}}_{\text{well-formed models}} \wedge \underbrace{\Phi_{\mathit{rules}}}_{\text{rule semantics}} \wedge \underbrace{\Phi_{\mathit{pre}}}_{\text{precondition holds}} \wedge \underbrace{\neg \Phi_{\mathit{post}}}_{\text{postcondition fails}}
\end{equation}

Here $\Phi_{\mathit{post}}$ requires a full postcondition witness: the chosen target elements exist, all required target-side links hold, any postcondition attribute constraints are satisfied, and every declared property trace link connects each postcondition element to the corresponding precondition element. If Z3 returns $\solverUNSAT$, no violation exists (the property $\resHolds$). If Z3 returns $\solverSAT$, the satisfying assignment encodes a counterexample: a specific source model and the resulting target model that violates the property.

The postcondition encoding uses the factored form when the postcondition graph has independent components, and falls back to monolithic encoding otherwise. For incremental checking, each precondition binding is checked separately with early termination on the first violation found.

\section{Cutoff Bound Derivations}
\label{app:cutoff-derivations}

\subsection{\texorpdfstring{Derivation of $K_{\text{tight}}$}{Derivation of K-tight}}

\begin{enumerate}
  \item \textbf{Initial support.} The violation starts with a match of the precondition pattern. The precondition has at most $p$ elements, so $|\mathit{Support}_0| \le p$. These are the ``seed'' elements.

  \item \textbf{Backward dependency chains.} To construct the minimal witness, we must ensure that the target elements required by the property are actually created. If a rule $R_i$ requires a target element $t$ via a backward link, we trace backward to the specific rule $R_{i-1}$ that created $t$ and include the source elements that $R_{i-1}$ matched. This forms a backward dependency chain. In the worst case, satisfying the property may require tracing through all $r$ relevant rules, so there are at most $r$ chains per seed element.

  \item \textbf{Chain depth.} Because the transformation layers form a strict ordering (R3), the backward dependency graph is a DAG. The maximum length of any chain is the depth $d$ of this DAG.

  \item \textbf{Cost per step (connected-component argument).} At each step backward in a chain, we add the match elements of one rule. A relevant rule matches at most $m$ elements. Because the chain is connected---the rule must share at least one element with the existing support---each step adds at most $m - 1$ \emph{new} elements. When $m = 1$, the matched element is already in the support, contributing 0 new elements; thus $(m-1)$ correctly bounds the additions.

  \item \textbf{Total additions.} Starting with $p$ seeds, tracing at most $r$ chains per seed, each of depth at most $d$, each step adding at most $(m-1)$ elements:
  \[
    \text{additions} \le p \cdot r \cdot d \cdot (m - 1)
  \]
  The $p$ multiplier is conservative: each seed element might independently require its own full set of $r$ chains.

  \item \textbf{Support size.}
  \[
    |\mathit{Support}_d| \le p + p \cdot r \cdot (m-1) \cdot d = p \cdot (1 + (m-1) \cdot r \cdot d)
  \]

  \item \textbf{Association closure.} To ensure the extracted elements form a well-formed model, we include the least transitive mandatory-association closure required by the metamodel (e.g., containment parents and any mandatory ancestors they in turn force). By definition of $a$, each retained element forces at most $a$ additional elements under this closure, so:
  \[
    |\mathit{Closure}(\mathit{Support}_d)| \le |\mathit{Support}_d| \cdot (a + 1)
  \]

  \item \textbf{Final formula.}
  \[
    K_{\text{tight}} = p \cdot (1 + (m-1) \cdot r \cdot d) \cdot (a + 1)
  \]
\end{enumerate}

\subsection{\texorpdfstring{Derivation of $K_{\text{sharp}}$}{Derivation of K-sharp}}

$K_{\text{sharp}}$ is a slightly looser but structurally simpler bound, derived with more conservative approximations.

\begin{enumerate}
  \item \textbf{Initial support.} As before, $|\mathit{Support}_0| \le p$.

  \item \textbf{Rule expansion without connected-component optimization.} Instead of tracking that each rule match shares at least one element with the existing support ($m-1$ new elements), we conservatively assume each match may add up to $m$ elements.

  \item \textbf{Support size.} Tracing backward through $d$ layers, with up to $r$ relevant rules per seed element:
  \[
    |\mathit{Support}_d| \le p + p \cdot r \cdot m \cdot d = p \cdot (1 + m \cdot r \cdot d)
  \]

  \item \textbf{Factoring out $d'$.} Define $d' = \max(d, 1)$. Since $1 \le d'$, we can safely over-approximate the constant $1$ as $d'$:
  \[
    p \cdot (1 + m \cdot r \cdot d) \le p \cdot (d' + m \cdot r \cdot d') = p \cdot (1 + m \cdot r) \cdot d'
  \]
  This factoring also handles the edge case $d = 0$ (a single-layer transformation with no backward dependencies), where $K_{\text{sharp}}$ correctly reduces to $p \cdot (1 + m \cdot r) \cdot (a + 1)$.

  \item \textbf{Association closure and final formula.}
  \[
    K_{\text{sharp}} = p \cdot (1 + m \cdot r) \cdot d' \cdot (a + 1)
  \]
\end{enumerate}

\paragraph{Linear vs.\ exponential growth.}
A natural concern is whether the support might grow exponentially with depth. It does not: the witness construction is \emph{selective}---at each chain step we add only the elements needed to justify one backward dependency, not all possible rule matches. The contribution of rule expansion is therefore \emph{additive} in chain length ($d$), yielding linear growth in the term $p \cdot r \cdot d \cdot (m-1)$ rather than multiplicative growth such as $(1 + r(m-1))^d$. While exponential behavior can appear in the SMT solver's search (as a function of the final bound $K$), the theorem-side increase in $K$ itself is linear in $d$.

\subsection{\texorpdfstring{Derivation of $K_{\text{coarse}}$}{Derivation of K-coarse}}

$K_{\text{coarse}}$ replaces the per-rule counting with a class-based approximation, avoiding dependence on $r$ entirely.

\begin{enumerate}
  \item \textbf{Initial support.} As before, $|\mathit{Support}_0| \le p$.

  \item \textbf{Rule contribution via class counting.} Rather than tracking individual rules and their chain structure, observe that each relevant rule match involves at most $m$ source elements drawn from $c$ distinct source classes. In the worst case, every combination of seed elements with rule matches contributes new elements, bounded by the product $c \cdot m$. Adding the $p$ seed elements:
  \[
    |\mathit{Support}_d| \le c \cdot m + p \le c \cdot (m + p)
  \]
  This replaces the $p \cdot (1 + m \cdot r)$ factor with $c \cdot (m + p)$, which is tighter when $c$ is small relative to $p \cdot r$.

  \item \textbf{Depth and association closure.}
  \[
    K_{\text{coarse}} = c \cdot (m + p) \cdot d' \cdot (a + 1)
  \]
\end{enumerate}

$K_{\text{coarse}}$ tends to dominate (i.e., be the minimum) when the number of relevant source classes $c$ is small, even if $r$ (the number of relevant rules) is large.

\subsection{Worked Example: PropertyHasField}
\label{app:worked-example}

To illustrate how the bounds are computed in practice, we trace through the \texttt{PropertyHasField} property of the UML-to-Java transformation. This property states: ``Every UML \texttt{Property} traces to a Java \texttt{FieldDeclaration}.'' The property has a single precondition element (\texttt{any prop : Property}) and a single postcondition element (\texttt{field : FieldDeclaration}) linked by a trace constraint.

\paragraph{Parameters.}
The cutoff analysis (with trace-aware dependency mode) determines:
\begin{itemize}
  \item $p = 1$: $\max(|V_{\mathit{Pre}}|, |V_{\mathit{Post}}|) = \max(1, 1) = 1$.
  \item $m = 3$: the largest relevant rule matches 3 source elements.
  \item $r = 8$: eight rules are relevant (they produce \texttt{FieldDeclaration} elements with the correct trace type, including rules for static, read-only, collection, and primitive-type properties).
  \item $d = 1$: backward dependency depth is 1 (single-step backward dependency; $d' = \max(d, 1) = 1$).
  \item $a = 5$: transitive mandatory-association closure adds up to 5 additional forced elements per retained element in the UML metamodel.
  \item $c = 5$: five source classes are reachable in the dependency graph.
\end{itemize}

\paragraph{Bound computation.}
\begin{align*}
  K_{\text{coarse}} &= c \cdot (m + p) \cdot d' \cdot (a + 1) = 5 \cdot (3 + 1) \cdot 1 \cdot 6 = 120 \\
  K_{\text{sharp}} &= p \cdot (1 + m \cdot r) \cdot d' \cdot (a + 1) = 1 \cdot (1 + 3 \cdot 8) \cdot 1 \cdot 6 = 150 \\
  K_{\text{tight}} &= p \cdot (1 + (m-1) \cdot r \cdot d) \cdot (a + 1) = 1 \cdot (1 + 2 \cdot 8 \cdot 1) \cdot 6 = 102
\end{align*}

The final bound is $K = \min(120, 150, 102) = 102$. Here $K_{\text{tight}}$ wins because the connected-component argument ($m-1 = 2$ instead of $m = 3$) yields a tighter inner term than $K_{\text{sharp}}$, and $K_{\text{coarse}}$ benefits from having fewer relevant classes ($c = 5$) but is still larger.

\paragraph{Per-class reduction.}
The global bound $K = 102$ would allocate 102 slots for every class. Per-class analysis (Theorem~\ref{thm:per-class}) determines much tighter bounds. On the source side:
\begin{itemize}
  \item $K_{\text{Property}} = 2$ (one seed from the precondition, plus one additional from rule expansion).
  \item $K_{\text{Class}} = 1$, $K_{\text{Package}} = 1$ (mandatory closure from \texttt{Property}).
  \item All other source classes: $K_C \le 1$.
\end{itemize}
On the target side, the dominant class is \texttt{FieldDeclaration} with $K_C = 25$ (driven by rule firing counts over the bounded source). The effective per-class bound is $\max_C K_C = 25$, reducing the search space from 102 uniform slots to 25 slots for the dominant class and far fewer for most others.

\subsection{Fixed-Point Argument for Per-Class Bounds}
\label{app:per-class-proof}

This subsection expands the proof idea behind Theorem~\ref{thm:per-class}. The goal is to justify why the iterative per-class computation yields a sound upper bound for every class in a minimal witness.

\paragraph{Step 1: Seed the source-side counts.}
By Lemma~\ref{lem:witness-restriction}, every counterexample may be reduced to a source witness consisting only of the precondition match plus the least mandatory-association closure needed for well-formedness. Therefore the initial information is given componentwise by $B_S^{0} = p_S$: for each source class $S$, any witness needs at most the precondition-selected instances of that class before closure is applied.

\paragraph{Step 2: Propagate mandatory closure.}
Mandatory associations induce implications of the form ``if class $A$ occurs, enough instances of class $B$ must also occur.'' Applying all such implications to the current source-side counts yields a larger source bound vector. Because each propagation step only increases counts and every count is globally capped by $K$, repeated propagation stabilizes after finitely many steps.

\paragraph{Step 3: Bound target production from bounded source counts.}
Once the source-side counts are bounded, Lemma~\ref{lem:target-production} yields a bound for each target class $C$:
\[
  K_C^{\text{base}} = p_C + \sum_{R \in \mathit{Relevant}} N_R \cdot m_C(R),
\]
where the per-rule firing count $N_R$ is itself bounded by the current source counts via the product formula of Lemma~\ref{lem:target-production}:
\[
  N_R \;\le\; \prod_{S \in \mathrm{MtTypes}(R)} K_S^{\,n_S(R)},
\]
with $\mathrm{MtTypes}(R)$ the source classes appearing in the match of rule $R$ and $n_S(R)$ the number of match elements of class $S$ in $R$. The key point is that a rule can create target instances only when one of its source-side matches fires, and---through this product---the number of possible such firings is already bounded by the current source counts. Thus target-side growth is determined entirely by the bounded source witness and the finite relevant-rule set.

\paragraph{Step 4: Propagate mandatory closure on the target side.}
The same well-formedness argument applies to target classes: if a target class instance is present, mandatory associations may force additional target-side instances. Repeatedly applying these obligations again yields a monotone increasing sequence, still capped by $K$.

\paragraph{Step 5: Iterate to a fixed point.}
Collecting the source seeds, source closure, target production, and target closure steps defines an update operator on vectors of class counts. This operator is monotone: increasing any class bound cannot decrease any other bound, because rule-firing counts and multiplicity obligations are themselves monotone in the input counts. Starting from the initial vector $B^{0}$ and iterating the operator therefore yields an increasing sequence
\[
  B^{0} \le B^{1} \le B^{2} \le \cdots \le K
\]
that must stabilize after finitely many steps, since every component ranges over the finite set $\{0,1,\ldots,K\}$.

\paragraph{Step 6: Why the fixed point bounds every witness.}
Every minimal counterexample witness must satisfy all four generating obligations above: its source counts are bounded by the seed plus closure (Steps 1--2), its target counts are bounded by rule production (Step 3) plus target closure (Step 4), and each of these obligations is captured by the update operator. Since the fixed point is the least vector satisfying all obligations simultaneously, the per-class counts of any minimal witness are bounded componentwise by the fixed point. This completes the argument.

\section{Empirical Validation of the Cutoff Bound}
\label{app:cutoff-validation}

The cutoff theorem of Section~\ref{sec:cutoff} predicts that, for any property in the F-LNR/G-BPP fragment, verification at the theorem-derived per-class bound $K$ is sound and complete: a property that yields \resHolds{} at $K$ holds in all models, and a counterexample at $K$ lifts to a concrete witness. This appendix reports an empirical experiment that exercises this prediction directly. Its purpose is twofold: (i)~to provide independent evidence that the implementation actually realizes the bound the theorem requires, and (ii)~to corroborate that the bound is \emph{tight} in the sense that uniformly weakening it by one slot per class causes the prover to lose the witness.

\paragraph{Experimental design.} For twelve representative properties drawn from eight DSLTrans specifications (covering the three K-boundary witnesses introduced in Table~\ref{tab:corpus-summary} plus selected properties from FSM2PetriNet, Class2Relational, BPMN2Petri, Persons, Attributes, and Ecore2JsonSchema), the experiment proceeds in three phases:
\begin{enumerate}
  \item \textbf{Uniform sweep.} The theorem-derived per-class source/target bounds are computed, then each bound is uniformly shifted by an offset $\delta \in \{-3, -2, -1, 0, +1, +2, +3\}$ and the property is re-verified at every offset. Negative properties (intentionally \resViolated{}) are expected to yield \resHolds{} for $\delta < 0$ and \resViolated{} for $\delta \geq 0$; positive properties are expected to remain \resHolds{} throughout.
  \item \textbf{Per-class selective $-1$ perturbation.} Starting from the base bounds, every individual source and target class is decremented by one in isolation; the cutoff theorem predicts that some single per-class decrement must change the result for negative properties (identifying the class on which the bound is \emph{binding}) and that no single decrement can change the result for positive properties.
  \item \textbf{Concrete witness validation.} For seven of the twelve properties (those whose synthetic input families could be enumerated mechanically), concrete XMI inputs at support levels base$\,-\,1$, base, and base$\,+\,1$ are executed through the concrete transformation engine and the resulting target models are inspected directly for the predicted property outcome.
\end{enumerate}

\paragraph{Results.} Table~\ref{tab:cutoff-validation} summarizes the outcome. All twelve properties match the theorem-predicted behavior in the uniform sweep, all twelve identify at least one binding per-class slot in the selective perturbation phase consistent with the witness structure, and all seven properties subjected to concrete validation match the predicted outcome at every support level. The dominant cutoff formula column reports which of the four bound formulas $K_{\mathrm{coarse}}$, $K_{\mathrm{sharp}}$, $K_{\mathrm{sharp2}}$, $K_{\mathrm{tight}}$ from Appendix~\ref{app:cutoff-derivations} is binding for the property; in 10 of 12 cases $K_{\mathrm{tight}}$ is the unique minimum, exercising the strongest of the derived bounds.

\begin{table}[H]
\centering
\caption{Empirical validation of the cutoff bound. \resHolds{}/\resViolated{} columns under \emph{Uniform sweep} indicate whether the result matches the theorem-predicted pattern at all seven offsets $K{-}3 \ldots K{+}3$; \emph{Selective $-1$} indicates whether at least one per-class decrement changed the result in the predicted direction (negative properties only); \emph{Concrete} indicates whether all three concrete support-level executions matched the predicted outcome (a dash means concrete witnesses were not enumerated for this property).}
\label{tab:cutoff-validation}
\footnotesize
\setlength{\tabcolsep}{3.5pt}
\renewcommand{\arraystretch}{1.2}
\begin{tabularx}{\textwidth}{@{}L{2.7cm} X r L{2.0cm} c c c@{}}
\toprule
\textbf{Spec} & \textbf{Property} & \textbf{Base $K$} & \textbf{Dominant} & \textbf{Uniform} & \textbf{Sel.\,$-1$} & \textbf{Conc.} \\
\midrule
\texttt{FSM2PetriNet} & \texttt{SMHasTwoNets\_ShouldFail} & 4 & \texttt{tight} & \checkmark & \checkmark & \checkmark \\
\texttt{FSM2PetriNet} & \texttt{StateHasTwoPlaces\_ShouldFail} & 12 & \texttt{tight} & \checkmark & \checkmark & --- \\
\texttt{Class2Relational} & \shortstack[l]{\texttt{Diag\_AbstractClass}\\\texttt{NoDirectTable\_ShouldFail}} & 215 & \texttt{tight} & \checkmark & \checkmark & --- \\
\texttt{Class2Relational} & \shortstack[l]{\texttt{Diag\_DataType}\\\texttt{BecomesTable\_ShouldFail}} & 2 & \shortstack[l]{\texttt{coarse/sharp/}\\\texttt{tight}} & \checkmark & \checkmark & --- \\
\texttt{BPMN2Petri} & \texttt{StartEventHasTwoPlaces} & 2 & \shortstack[l]{\texttt{coarse/sharp/}\\\texttt{tight}} & \checkmark & \checkmark & \checkmark \\
\texttt{Persons} & \texttt{SonBecomesMan} & 18 & \texttt{tight} & \checkmark & n/a & \checkmark \\
\texttt{Attributes} & \texttt{HighPriorityItemRecorded} & 2 & \texttt{tight} & \checkmark & n/a & --- \\
\texttt{Ecore2JsonSchema} & \texttt{AttributeHasRef\_ShouldFail} & 8 & \texttt{sharp/tight} & \checkmark & \checkmark & --- \\
\shortstack[l]{\texttt{Deep}\\\texttt{Dependency}} & \texttt{DeepLeafMapsToTwoTD\_ShouldFail} & 325 & \texttt{tight} & \checkmark & \checkmark & \checkmark \\
\shortstack[l]{\texttt{TargetTight}\\\texttt{Bound}} & \texttt{SourceHasTwoTB\_ShouldFail} & 8 & \texttt{tight} & \checkmark & \checkmark & \checkmark \\
\shortstack[l]{\texttt{TargetTight}\\\texttt{Bound}} & \texttt{SourceHasTwoTD\_ShouldFail} & 8 & \texttt{tight} & \checkmark & \checkmark & \checkmark \\
\shortstack[l]{\texttt{TargetTight}\\\texttt{Bound}} & \texttt{SourceHasTwoTF\_ShouldFail} & 8 & \texttt{tight} & \checkmark & \checkmark & \checkmark \\
\midrule
\multicolumn{4}{@{}l}{\textbf{Total: 12 properties, 8 specs}} & \textbf{12/12} & \textbf{10/10} & \textbf{7/7} \\
\bottomrule
\end{tabularx}
\end{table}

The ``n/a'' entries in the \emph{Selective $-1$} column correspond to positive properties (\resHolds{} at base): for these, no single per-class decrement is expected to change the outcome, and indeed none did. The two cases with dominant formula \texttt{coarse/sharp/tight} are properties where multiple bound formulas agree at the minimum, so the experiment cannot distinguish which is binding; this is recorded as a dominant-formula tie rather than as a separate result.

\paragraph{Reproducibility and full per-property breakdown.} The experiment is driven by a dedicated K-boundary experiment driver in the companion repository~\cite{lucio2026companion}, which computes the per-class bounds, performs the uniform sweep and selective perturbations symbolically through the SMT backend, and feeds the concrete witnesses through the same concrete execution engine described in Section~\ref{sec:prover}. The complete per-property breakdown---including every offset's solver time, every per-class perturbation, every concrete input, and the formula values $K_{\mathrm{coarse}}$, $K_{\mathrm{sharp}}$, $K_{\mathrm{sharp2}}$, $K_{\mathrm{tight}}$ for every property---is published as a markdown report inside the same repository, under \path{docs/evaluation/k_boundary_expanded_results.md}. Re-running the experiment regenerates that document end-to-end.

\section{Benchmark Corpus Details}
\label{app:corpus-details}

Table~\ref{tab:corpus-details} summarizes the provenance, size, and domain of each transformation in the evaluation corpus.

\begin{table}[H]
\centering
\caption{Transformation corpus provenance and metrics (29 transforms, 899 properties).}
\label{tab:corpus-details}
\scriptsize
\begin{tabularx}{\textwidth}{>{\raggedright\arraybackslash}X >{\raggedright\arraybackslash}X r r r >{\raggedright\arraybackslash}X}
\toprule
\textbf{Transformation} & \textbf{Source} & \textbf{Rules} & \textbf{Layers} & \textbf{Props} & \textbf{Domain} \\
\midrule
\multicolumn{6}{l}{\textit{Primary benchmark suites}} \\
ATLCompiler & ATL Zoo & 55 & 9 & 20 & Compiler \\
BibTeX2DocBook & ATL Zoo & 11 & 3 & 33 & Bibliography \\
BPMN2Petri & ATL Zoo & 12 & 3 & 5 & Process \\
Class2Relational & ATL Zoo / TTC & 18 & 6 & 29 & Database \\
Ecore2JsonSchema & Community & 5 & 3 & 20 & Schema \\
FSM2PetriNet & ATL Zoo & 3 & 3 & 10 & Behavioral \\
KM32OWL & Community & 6 & 3 & 20 & Ontology \\
OCLCompiler & ATL Zoo & 29 & 6 & 40 & Compiler \\
Persons & DSLTrans & 9 & 3 & 9 & Social \\
PNML2NUPN & Community & 5 & 3 & 20 & Petri nets \\
SimplePDL2Tina & ATL Zoo & 6 & 3 & 20 & Process \\
Statechart2Flow & ATL Zoo & 3 & 3 & 20 & Behavioral \\
Tree2Graph & DSLTrans & 2 & 2 & 6 & Graph \\
UML2Java & Custom & 22 & 6 & 29 & Software \\
University2ITSystem & DSLTrans & 8 & 5 & 21 & Enterprise \\
UseCase2Activity & ATL Zoo & 3 & 3 & 12 & Behavioral \\
\midrule
\multicolumn{6}{l}{\textit{Graph-mapping suite (common template)}} \\
Component2Deploy. & ATL Zoo & 3 & 3 & 20 & Graph \\
ER2Relational & ATL Zoo & 3 & 3 & 20 & Graph \\
Family2Social & ATL Zoo & 3 & 3 & 20 & Graph \\
FeatureModel2Conf. & ATL Zoo & 3 & 3 & 20 & Graph \\
Mindmap2Graph & ATL Zoo & 3 & 3 & 20 & Graph \\
Org2AccessControl & ATL Zoo & 3 & 3 & 20 & Graph \\
Workflow2PetriNet & ATL Zoo & 3 & 3 & 20 & Graph \\
\midrule
\multicolumn{6}{l}{\textit{Synthetic and stress benchmarks}} \\
Attributes & DSLTrans & 1 & 1 & 3 & K-boundary \\
DeepDependency & DSLTrans & 4 & 4 & 2 & K-boundary \\
TargetTightBound. & DSLTrans & 3 & 1 & 3 & K-boundary \\
Scalability & Synthetic & 67 & 14 & 17 & Stress \\
MegaStress & Synthetic & 78 & 3 & 100 & Stress \\
UltraStress & Synthetic & 200 & 3 & 300 & Stress \\
\bottomrule
\end{tabularx}
\end{table}

}

\DeferredAppendices

\end{document}